\newtheorem{lemma}{Lemma}[section]
\newtheorem{proposition}{Proposition}[section]
\newtheorem{theorem}{Theorem}[section]
\theoremstyle{remark}
\theoremstyle{definition}
\newcommand{\rg}{\mathrm{rg}\,}
\title{Asymptotics and analytic modes for the wave equation in similarity
coordinates}
\author{Roland Donninger\thanks{roland.donninger@univie.ac.at}\\
\small{Faculty of Physics, Gravitational Physics}\\
\small{University of Vienna}\\
\small{Boltzmanngasse 5}\\
\small{A-1090 Wien, Austria}}
\begin{document}

\maketitle

\abstract{We consider the radial wave equation in similarity coordinates within
the semigroup formalism.
It is known that the generator of the semigroup exhibits a continuum of
eigenvalues and embedded in this continuum there exists a discrete set of 
eigenvalues with analytic
eigenfunctions.
Our results show that, for sufficiently regular data, the long time behaviour of the
solution is governed by the analytic eigenfunctions. 
The same techniques are applied to the linear stability problem for the 
fundamental self--similar solution $\chi_T$ of the wave equation 
with a 
focusing power nonlinearity.
Analogous to the free wave equation, we show that the long time behaviour 
(in similarity coordinates) of linear
perturbations around $\chi_T$ is governed by analytic mode solutions.
In particular, this yields a rigorous proof for the linear stability of 
$\chi_T$ with the sharp decay rate for the perturbations.}

\section{Introduction}
\subsection{Motivation}
The focusing semilinear wave equation
\begin{equation}
\label{eq_focus} \chi_{tt}-\Delta \chi=\chi^p 
\end{equation}
for $\chi: \mathbb{R}\times \mathbb{R}^3 \to \mathbb{R}$, 
where $p>1$ is an odd integer, exhibits 
radial self--similar solutions, i.e., solutions of the form 
$\chi(t,x)=(T-t)^{-2/(p-1)}f(|x|/(T-t))$ for a function 
$f: \mathbb{R} \to \mathbb{R}$ and fixed $T>0$.
In fact, the simplest solution of this type, where $f$ is just a constant, 
can be obtained by neglecting the Laplacian in Eq.~(\ref{eq_focus})
and solving the resulting ordinary differential equation in $t$. 
We refer to this solution as the \emph{fundamental self--similar solution} and
denote it by $\chi_T$.
Although self--similar solutions do not have finite energy, one may use them 
together with smooth cut--off functions and finite speed of propagation to 
demonstrate blow up for solutions with smooth compactly supported initial data.
This observation immediately raises the question how typical such a 
self--similar blow up is.
Does it happen only for the very special initial data constructed 
by the procedure described above or can it be observed for a larger set of data?
Numerical investigations \cite{bizon} indicate that the latter is true.
Actually, there is a much stronger conjecture, namely that the 
fundamental self--similar solution describes 
the blow up behaviour for \emph{generic} large initial data.
This conjecture is based on numerical investigations for the radial equation.
In these simulations one observes 
that the future development of sufficiently large initial data converges 
to the fundamental self--similar solution near the center $r=0$ \cite{bizon}.
This indicates that $\chi_T$ has to be stable in some sense.
We remark that for $p=3$ there are also rigorous results in this direction 
(see \cite{merle1}, \cite{merle2}, \cite{merle3}).
In fact, Merle and Zaag have rigorously proved the full nonlinear stability of 
a
more general family of explicit solutions (which includes $\chi_T$) 
for the corresponding problem in one
space dimension and any $p>1$ (see \cite{merle4}, p. 48, Theorem 3).
The stability holds in the topology of the energy space.
We also mention the two recent papers \cite{merle5},
\cite{merle6} 
on interesting consequences of this result.

In order to analyse linear stability of the fundamental self--similar solution it is
convenient to introduce similarity coordinates $(\tau,\rho)$
defined by $\tau:=-\log(T-t)$ and
$\rho:=\frac{r}{T-t}$.
Since convergence is only expected near $r=0$, one requires $\rho \in (0,1)$
which corresponds to the interior of the backward lightcone of the spacetime point
$(t,r)=(T,0)$.
Transforming Eq.~(\ref{eq_focus}) to similarity coordinates, inserting the
ansatz $\chi=\chi_T+\phi$ and linearizing in $\phi$
yields a rather nasty equation of the form
\begin{equation}
\label{eq_lincss} 
\phi_{\tau \tau}+\phi_\tau+2\rho \phi_{\tau \rho}-(1-\rho^2)\phi_{\rho
\rho}-2\frac{1-\rho^2}{\rho}\phi_\rho-pc_0 \phi=0
\end{equation}
where $c_0>0$ is a constant defined by $\chi_T$. 
The first step in a heuristic stability analysis is to look for mode solutions,
i.e., one inserts the ansatz $\phi(\tau,\rho)=e^{\lambda \tau}u(\rho)$.
This yields the generalized eigenvalue problem
\begin{equation}
\label{eq_genev}
-(1-\rho^2)u''-2\frac{1-\rho^2}{\rho}u'+2\lambda \rho
u'+[\lambda(1+\lambda)-pc_0]u=0 
\end{equation}
which has two singular points at $\rho=0$ and $\rho=1$.
A necessary condition for linear stability of $\chi_T$ is the nonexistence of
mode solutions with $\mathrm{Re}\lambda>0$.
However, it is an entirely nontrivial question what kind of solutions of Eq.
(\ref{eq_genev}) one should consider as admissible.
In other words, it is not clear what boundary conditions one
should impose at the singular point $\rho=1$.
A basic Frobenius analysis shows that around $\rho=1$ 
there exists an analytic solution and a
nonanalytic one where the latter behaves as $(1-\rho)^{1-\lambda}$ for $\rho \to 1$ (we
assume noninteger $\lambda$ for simplicity).
This shows that the nonanalytic solution becomes more and more regular at
the backward lightcone as
$\mathrm{Re}\lambda$ decreases.
Hence, if $\mathrm{Re}\lambda$ is sufficiently small, 
there is no singular solution which can be excluded a priori.
Another difficulty we encounter is the fact that, 
since this is a highly non self--adjoint problem, the nonexistence of
unstable modes does not imply linear stability.

The only way to overcome these obstacles is to look for a well--posed initial
value formulation for Eq.~(\ref{eq_lincss}).
It turns out that the machinery provided by semigroup theory can be successfully
applied here.
Very sketchy, one writes Eq.~(\ref{eq_lincss}) as a first order system of the
form 
\begin{equation}
\label{eq_lincssop}
\frac{d}{d\tau}\Phi(\tau)=L \Phi(\tau) 
\end{equation}
where $L$ is a spatial differential operator which is realized as an unbounded
linear operator acting on a Banach space.
The \emph{formal} solution of this equation is $\Phi(\tau)=\exp(\tau L)\Phi(0)$ but this does not
make sense mathematically since $L$ is unbounded.
With the help of semigroup
theory one is able to construct a well--defined one--parameter family $S(\tau)$ of
operators such that the solution of Eq.~(\ref{eq_lincssop}) with initial data
$\Phi(0)$ is given by $\Phi(\tau)=S(\tau)\Phi(0)$.
Such a formulation solves the two problems described above.
First, there exists a well--defined notion of spectrum which implicitly yields
the correct boundary condition for Eq.~(\ref{eq_genev}), and, secondly, one may
use abstract results from semigroup theory to obtain growth bounds for the
solutions.

\subsection{The problem of analytic modes}
For simplicity one may first develop a semigroup formulation for the free wave
equation, i.e., Eq.~(\ref{eq_lincss}) with $c_0=0$.
This problem has recently been considered
\cite{roland1} and we have shown that there exists a semigroup $S_0(\tau)$ 
that yields the
time evolution in energy space, i.e., for very rough data.
It should be remarked that this is an interesting result per se, at least from
the mathematical point of view, since the semigroup generator is highly non
self--adjoint.
In fact, it is not even normal and its spectrum has a remarkable structure:
It
consists (essentially) of a continuum of eigenvalues filling a left 
half--plane in
the set of complex numbers.  
We review the corresponding results in Sec. \ref{sec_reviewfree}.
A special subset $\{0,-1,-2,\dots\}$ of the point spectrum consists of eigenvalues 
with analytic eigenfunctions.
From the point of view of semigroup theory there is no reason to consider these
"analytic eigenvalues" as distinguished.
However, in numerical evolutions one observes that the asymptotic behaviour (for
$\tau \to \infty$) of solutions 
is exactly described by the analytic eigenvalues and eigenfunctions \footnote{
To be precise, this is true only for data that do not have compact support
since otherwise Huygens' principle applies.}.
Therefore, the question is how to explain this behaviour.
Note that this is not a mere effect of preservation of regularity.
In the abstract approach, preservation of regularity is expressed by the
fact that domains of powers of the generator $L_0$ are invariant under the time
evolution, i.e., if $\Phi(0) \in \mathcal{D}(L_0^k)$ for $k \in \mathbb{N}$
then $S_0(\tau)\Phi(0) \in \mathcal{D}(L_0^k)$.
But one cannot get rid of "nonanalytic eigenvalues" by prescribing 
data in $\mathcal{D}(L_0^k)$ since any eigenvector of $L_0$ is by
definition also an eigenvector of $L_0^k$.
However, in Sec. \ref{sec_invh2k} we show that another class 
of higher Sobolev spaces, denoted by $\mathcal{H}^{2k}$, 
remains invariant under $S_0$.
A key observation in this respect is a certain commutator property exhibited by
the generator $L_0$, see Lemma \ref{lem_elem} below. 
The spaces $\mathcal{H}^{2k}$ are suitable to get rid of the continuum 
eigenvalues and only
analytic ones remain.
More precise, we show that initial data in $\mathcal{H}^{2k}$ can be expanded 
in a sum of the first
$2k$ analytic eigenfunctions of $L_0$ plus a remainder whose time evolution
decays faster than the rest.
This result shows in particular that the long time behaviour of solutions with
smooth initial
data is described by the analytic modes as is observed numerically.
 
\subsection{Application to the semilinear wave equation}
Numerical studies of Eq.~(\ref{eq_lincss}) exhibit a very similar behaviour as 
described above for the free wave equation: The large $\tau$ behaviour of 
linear perturbations around $\chi_T$ is precisely described by analytic modes, 
i.e., analytic
solutions of Eq.~(\ref{eq_genev}).
The techniques explained above for the free wave equation carry over to this
problem.
We obtain the analogous result (see Theorem \ref{thm_main} below) which shows
that the long time behaviour is indeed given by the analytic modes.
In particular, this result yields a rigorous proof for the linear stability of
the fundamental self--similar solution of Eq.~(\ref{eq_focus}) with the sharp
decay rate for the perturbation.

Finally, we remark that many aspects of the problem of analytic modes
are related to the 
work of N. Szpak on quasinormal
mode expansions for solutions of the wave equation \cite{szpak}.
However, the results in \cite{szpak} have been obtained by very different
methods involving the Laplace transform.
It is likely that the techniques of \cite{szpak} can also be applied to 
our problem and this would lead to a very different proof of our results.

\subsection{Notations}
To improve readability we write vectors as boldface letters and the components
are numbered by lower indices, e.g. $\mathbf{u}=(u_1,u_2)^T$.
The notation $X \hookrightarrow Y$ for two normed vector spaces $X,Y$ means that
$X$ is continuously embedded in $Y$.
When given an inner product $(\cdot|\cdot)_X$ on a vector space $X$ we denote
the induced norm by $\|\cdot\|_X$, i.e., $\|\cdot\|_X:=\sqrt{(\cdot|\cdot)_X}$.
The Cartesian product $X \times Y$ of two vector spaces $X$ and $Y$ 
with inner products
$(\cdot|\cdot)_X$ and $(\cdot|\cdot)_Y$ is implicitly assumed to be
equipped with the inner product $(\mathbf{u}|\mathbf{v})_{X \times Y}
:=(u_1|v_1)_X+(u_2|v_2)_Y$.
For a Banach space $X$ we denote by $\mathcal{B}(X)$ the space of bounded linear
operators on $X$.
For a closed operator $L: \mathcal{D}(L) \subset X \to X$ we set
$R_L(\lambda):=(\lambda -L)^{-1}$ whenever the right--hand side exists.
The resolvent set of $L$ is denoted by $\rho(L)$ and the point, continuous and
residual spectra by $\sigma_p(L)$, $\sigma_c(L)$ and $\sigma_r(L)$, respectively
(see \cite{roland1} for the precise definitions).
Finally, the expression $A \lesssim B$ means that there exists a $C>0$ such that
$A \leq CB$. 

\section{Semigroup formulation in energy space}
\label{sec_reviewfree}
In this section we review results recently obtained by the author 
\cite{roland1} on a semigroup
formulation of the free wave equation in similarity coordinates.
We define similarity coordinates $(\tau,\rho)$ as explained in the introduction
by $\tau:=-\log(T-t)$,
$\rho:=\frac{r}{T-t}$ and
consider the radial wave equation on $(3+1)$ Minkowski space,
$$ \tilde{\psi}_{tt}-\tilde{\psi}_{rr}-\frac{2}{r}\tilde{\psi}_r=0. $$
Substituting $\psi(t,r):=r\tilde{\psi}(t,r)$ yields
$$ \psi_{tt}-\psi_{rr}=0 $$
with the boundary condition $\psi(t,0)=0$ for all $t$.
We write this equation as a first order system
$$ \partial_t \Psi=\left ( \begin{array}{cc}0 & 1 \\ 1 & 0
\end{array} \right ) \partial_r \Psi $$
where $\Psi:=(\psi_t, \psi_r)^T$.
Changing to similarity coordinates we obtain
\begin{equation}
\label{eq_firstorder}
\partial_\tau \Phi=\left ( \begin{array}{cc}-\rho & 1 \\ 1 & -\rho
\end{array} \right ) \partial_\rho \Psi 
\end{equation}
where $\Phi(\tau,\rho):=\Psi(T-e^{-\tau}, \rho e^{-\tau})$.

Let $\mathcal{H}:=L^2(0,1) \times L^2(0,1)$, $\mathcal{D}(\tilde{L}_0):=\{\mathbf{u} \in
C^1[0,1] \times C^1[0,1]: u_1(0)=0\}$ and 
$$ \tilde{L}_0 \mathbf{u}(\rho):=\left( \begin{array}{c}
-\rho u_1'(\rho) +u_2'(\rho) \\ u_1'(\rho)-\rho u_2'(\rho) \end{array} \right
). $$
$\tilde{L}_0: \mathcal{D}(\tilde{L}_0) \subset \mathcal{H} \to \mathcal{H}$ is a
densely defined linear operator on the Hilbert space $\mathcal{H}$.
An operator formulation of Eq.~(\ref{eq_firstorder}) is given by
$$ \frac{d}{d\tau} \Phi(\tau)=\tilde{L}_0 \Phi(\tau) $$
for a strongly differentiable function $\Phi: [0,\infty) \to \mathcal{H}$.
We have the following result \cite{roland1}.

\begin{theorem}
\label{thm_free}
The operator $\tilde{L}_0$ is closable and its closure $L_0$ generates a
strongly continuous one--parameter semigroup $S_0: [0,\infty) \to
\mathcal{B}(\mathcal{H})$ satisfying $\|S_0(\tau)\|_{\mathcal{B}(\mathcal{H})}
\leq e^{\frac{1}{2}\tau}$ for all $\tau>0$. 

The spectrum of $L_0$ is given by 
$\sigma_p(L_0)=\{\lambda \in \mathbb{C}: \mathrm{Re}\lambda < \frac{1}{2}\}$, 
$\sigma_c(L_0)=\{\lambda \in \mathbb{C}: \mathrm{Re}\lambda = \frac{1}{2}\}$,
$\sigma_r(L_0)=\emptyset$.
\end{theorem}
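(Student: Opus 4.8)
The plan is to obtain the generation statement from the Lumer--Phillips theorem and the spectral statement from an explicit eigenfunction computation combined with the growth bound. Everything becomes more transparent after diagonalising the coefficient matrix: with $v_1:=u_1+u_2$, $v_2:=u_1-u_2$ the operator $\tilde L_0$ decouples into $v_1\mapsto(1-\rho)v_1'$ and $v_2\mapsto-(1+\rho)v_2'$, while the boundary condition $u_1(0)=0$ turns into $v_1(0)+v_2(0)=0$. The core of the first part is the dissipativity estimate, which I would prove by a single integration by parts using $u_1(0)=0$; it yields the identity
\[
\mathrm{Re}\,(\tilde L_0\mathbf u\,|\,\mathbf u)_{\mathcal H}=\tfrac12\|\mathbf u\|_{\mathcal H}^2-\tfrac12|u_1(1)-u_2(1)|^2\le\tfrac12\|\mathbf u\|_{\mathcal H}^2
\]
for all $\mathbf u\in\mathcal D(\tilde L_0)$, so that $\tilde L_0-\tfrac12$ is dissipative. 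A densely defined dissipative operator is closable; denote by $L_0$ the closure of $\tilde L_0$.

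By (the closable version of) the Lumer--Phillips theorem it then suffices to show in addition that $\mathrm{rg}(\lambda-\tilde L_0)$ is dense for some $\lambda>\tfrac12$. Given $\mathbf f\in C^\infty[0,1]\times C^\infty[0,1]$, a dense subset of $\mathcal H$, I would solve $(\lambda-\tilde L_0)\mathbf u=\mathbf f$ by integrating the two decoupled first--order ODEs; each has an integrating factor $(1\mp\rho)^{\lambda}$, and because $\mathrm{Re}\,\lambda>0$ the solution obtained by integrating from the singular endpoint $\rho=1$ inwards is finite, extends to a $C^1$ function up to $\rho=1$, and — with the remaining free constant fixed by $v_1(0)+v_2(0)=0$ — yields a genuine element of $\mathcal D(\tilde L_0)$. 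Hence $L_0-\tfrac12$ generates a contraction semigroup $\tilde S_0$, and $S_0(\tau):=e^{\tau/2}\tilde S_0(\tau)$ is the semigroup generated by $L_0$, with $\|S_0(\tau)\|_{\mathcal B(\mathcal H)}\le e^{\tau/2}$.

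For the spectrum, the growth bound and standard semigroup theory give the Laplace representation $R_{L_0}(\lambda)=\int_0^\infty e^{-\lambda\tau}S_0(\tau)\,d\tau$ for $\mathrm{Re}\,\lambda>\tfrac12$, so $\sigma(L_0)\subset\{\mathrm{Re}\,\lambda\le\tfrac12\}$. For $\mathrm{Re}\,\lambda<\tfrac12$ the decoupled eigenvalue equation is solved by $v_1(\rho)=c(1-\rho)^{-\lambda}$, $v_2(\rho)=-c(1+\rho)^{-\lambda}$, and $(1-\rho)^{-\lambda}\in L^2(0,1)$ exactly when $\mathrm{Re}\,\lambda<\tfrac12$; one then checks, by a cut--off approximation near $\rho=1$, that the corresponding $\mathbf u$ lies in $\mathcal D(L_0)$ even though $\partial_\rho v_1\notin L^2$ — the point being that membership in $\mathcal D(L_0)$ only forces $(1-\rho)\partial_\rho v_1$ and $(1+\rho)\partial_\rho v_2$ (rather than $\partial_\rho v_j$ alone) to be square integrable, the trace at $\rho=0$ remaining well defined. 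Thus $\{\mathrm{Re}\,\lambda<\tfrac12\}\subset\sigma_p(L_0)$, and since the spectrum is closed this forces $\sigma(L_0)=\{\mathrm{Re}\,\lambda\le\tfrac12\}$. On the line $\mathrm{Re}\,\lambda=\tfrac12$ the solution $(1-\rho)^{-\lambda}$ is no longer square integrable, so the only $L^2$--solution of the eigenvalue equation compatible with the boundary condition is the trivial one and $\lambda-L_0$ is injective; solving the resolvent ODE as above for $\mathbf f$ with $(f_1+f_2)(1)=0$ (a dense class) produces $\mathbf u\in\mathcal D(L_0)$, hence $\mathrm{rg}(\lambda-L_0)$ is dense. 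Therefore $\{\mathrm{Re}\,\lambda=\tfrac12\}\subset\sigma_c(L_0)$ and $\sigma_r(L_0)=\emptyset$.

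I expect the real work to lie not in the dissipativity estimate or the Lumer--Phillips bookkeeping — which are routine — but in the precise description of $\mathcal D(L_0)$ and the verification that the candidate eigenfunctions, which fail to belong to $\mathcal D(\tilde L_0)$ because $\partial_\rho v_1$ diverges at $\rho=1$, nevertheless lie in the closure, together with the accompanying regularity bookkeeping for the resolvent equation near the singular endpoint $\rho=1$ (in particular in the borderline case $\mathrm{Re}\,\lambda=\tfrac12$).
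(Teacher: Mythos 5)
The paper does not actually prove Theorem \ref{thm_free} here but quotes it from \cite{roland1}; the methods it does display for the analogous higher--regularity statements (dissipativity via integration by parts, an explicit resolvent obtained by integrating the ODE inward from the singular endpoint $\rho=1$, the Lumer--Phillips theorem, and explicit eigenfunctions built from $(1\mp\rho)^{-\lambda}$, cf.\ Proposition \ref{prop_sgl0k} and Lemma \ref{lem_s0ker}) are exactly the ones you use, so your proposal is correct and follows essentially the same route. Your diagonalization into the characteristic variables $u_1\pm u_2$ is a presentational simplification rather than a different method, and you correctly isolate the one genuinely delicate point, namely verifying that the singular eigenfunction candidates and the resolvent output on the line $\mathrm{Re}\,\lambda=\tfrac12$ lie in the domain of the closure $L_0$.
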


\section{Semigroup formulation for more regular data}

\subsection{Invariance of higher Sobolev spaces}
\label{sec_invh2k}
We show that a certain class of higher Sobolev spaces is
invariant under the semigroup $S_0$.
For $k \in \mathbb{N}_0$ we set 
$$\mathcal{H}^{2k}:=\{\mathbf{u} \in H^{2k}(0,1)
\times H^{2k}(0,1): u_1^{(2j)}(0)=u_2^{(2j+1)}(0)=0, j \in \mathbb{N}_0, j<k\}$$
and
define an operator $D^2: \mathcal{H}^2 \to \mathcal{H}$ by
$D^2\mathbf{u}:=\mathbf{u}''$.
We have $\mathcal{H}=\mathcal{H}^0$ and equip $\mathcal{H}^{2k}$ with the inner product 
$(\mathbf{u}|\mathbf{v})_{\mathcal{H}^{2k}}
:=(\mathbf{u}|\mathbf{v})_\mathcal{H}
+(D^{2k} \mathbf{u}|D^{2k} \mathbf{v})_\mathcal{H}$.
The following lemma summarizes elementary properties.

\begin{lemma}
\label{lem_elem}
\begin{enumerate}
\item $\mathcal{H}^{2k}$ is a Hilbert space.
\item $\mathcal{H}^{2(k+1)}$ is a dense subspace of $\mathcal{H}^{2k}$ and the inclusion $\mathcal{H}^{2(k+1)} \subset \mathcal{H}^{2k}$ is
continuous.
\item The operator $D^2$ satisfies $D^2\mathcal{H}^{2(k+1)} \subset \mathcal{H}^{2k}$.
\item We have $\mathcal{H}^{2(k+1)} \subset \mathcal{D}(L_0)$ and 
$L_0\mathcal{H}^{2(k+1)} \subset \mathcal{H}^{2k}$.
\item $D^2$ and $L_0$ satisfy the commutator relation $D^2 L_0 \mathbf{u}=L_0 D^2 \mathbf{u}-
2D^2 \mathbf{u}$ for all $\mathbf{u} \in \mathcal{H}^4$.
\end{enumerate}
\end{lemma}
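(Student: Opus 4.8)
The plan is to prove the five items in the order given; only the density statement in item (2) requires genuine work, the rest being bookkeeping with the one--dimensional embedding $H^{m}(0,1)\hookrightarrow C^{m-1}[0,1]$ and the defining boundary conditions. For item (1) I would first observe that on $H^{2k}(0,1)\times H^{2k}(0,1)$ the norm induced by $(\cdot|\cdot)_{\mathcal H^{2k}}$ is equivalent to the standard one, since the intermediate norms $\|u_i^{(j)}\|_{L^2(0,1)}$, $0<j<2k$, are controlled by $\|u_i\|_{L^2(0,1)}+\|u_i^{(2k)}\|_{L^2(0,1)}$ by the classical Gagliardo--Nirenberg interpolation inequality on an interval. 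Because $H^{2k}(0,1)\hookrightarrow C^{2k-1}[0,1]$, the functionals $\mathbf u\mapsto u_1^{(2j)}(0)$ and $\mathbf u\mapsto u_2^{(2j+1)}(0)$, $j<k$, are continuous, so $\mathcal H^{2k}$ is a closed subspace of the Hilbert space $H^{2k}(0,1)\times H^{2k}(0,1)$ and hence complete.

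For item (2), continuity of the inclusion is immediate from this norm equivalence, $H^{2(k+1)}(0,1)\hookrightarrow H^{2k}(0,1)$, and the fact that the conditions defining $\mathcal H^{2(k+1)}$ include those defining $\mathcal H^{2k}$. Density I would establish in two steps. First, the vector fields in $\mathcal H^{2k}$ with $C^\infty[0,1]$ components are dense in $\mathcal H^{2k}$: approximate $\mathbf u\in\mathcal H^{2k}$ componentwise by $C^\infty$ functions in $H^{2k}(0,1)$, and subtract from the approximants small combinations of the fixed functions $\rho^{2j}\eta(\rho)$ (in the first component) and $\rho^{2j+1}\eta(\rho)$ (in the second), $j<k$, where $\eta\in C^\infty$ equals $1$ near $0$, with coefficients chosen so as to restore the finitely many vanishing derivatives at $0$; by continuity of the functionals above these coefficients tend to $0$, so the correction tends to $0$ in $H^{2k}(0,1)$. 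Second, given a smooth $\mathbf v\in\mathcal H^{2k}$, replace $v_1$ by $v_1-\frac{v_1^{(2k)}(0)}{(2k)!}\rho^{2k}\eta(\rho/\varepsilon)$ and $v_2$ by $v_2-\frac{v_2^{(2k+1)}(0)}{(2k+1)!}\rho^{2k+1}\eta(\rho/\varepsilon)$; since the corrections vanish to order $\ge 2k$ at $0$ they preserve all conditions defining $\mathcal H^{2k}$ while creating the one extra vanishing derivative per component needed for $\mathcal H^{2(k+1)}$, and the rescaling $\rho=\varepsilon s$ shows $\|\rho^{2k}\eta(\rho/\varepsilon)\|_{\mathcal H^{2k}}=O(\varepsilon^{1/2})\to 0$ as $\varepsilon\to 0$, because each term of $\partial_\rho^{2k}\bigl[\rho^{2k}\eta(\rho/\varepsilon)\bigr]$ is a constant times $\rho^{\ell}\varepsilon^{-\ell}\eta^{(\ell)}(\rho/\varepsilon)$, whose $L^2(0,1)$ norm is $O(\varepsilon^{1/2})$; likewise for the second correction. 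The resulting vector field is smooth and satisfies the $\mathcal H^{2(k+1)}$ conditions, hence lies in $\mathcal H^{2(k+1)}$.

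Items (3) and (4) are direct. For (3): if $\mathbf u\in\mathcal H^{2(k+1)}$ then $\mathbf u''\in H^{2k}(0,1)\times H^{2k}(0,1)$, and $(u_1'')^{(2j)}(0)=u_1^{(2(j+1))}(0)$, $(u_2'')^{(2j+1)}(0)=u_2^{(2(j+1)+1)}(0)$ vanish for $j<k$, since the conditions defining $\mathcal H^{2(k+1)}$ give $u_1^{(2i)}(0)=u_2^{(2i+1)}(0)=0$ for $i<k+1$, in particular for $i=j+1$. For (4): iterating item (2) gives $\mathcal H^{2(k+1)}\subset\mathcal H^2$, and as $H^2(0,1)\hookrightarrow C^1[0,1]$ and the $\mathcal H^2$ conditions include $u_1(0)=0$, every $\mathbf u\in\mathcal H^{2(k+1)}$ lies in $\mathcal D(\tilde L_0)\subset\mathcal D(L_0)$ with $L_0\mathbf u=\tilde L_0\mathbf u$; writing $\tilde L_0\mathbf u=(v_1,v_2)^T$ with $v_1=-\rho u_1'+u_2'$ and $v_2=u_1'-\rho u_2'$, the Leibniz rule gives $v_1^{(2j)}(0)=-2j\,u_1^{(2j)}(0)+u_2^{(2j+1)}(0)$ and $v_2^{(2j+1)}(0)=u_1^{(2(j+1))}(0)-(2j+1)\,u_2^{(2j+1)}(0)$, both of which vanish for $j<k$ by the conditions defining $\mathcal H^{2(k+1)}$, so $L_0\mathbf u\in\mathcal H^{2k}$.

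Finally, for item (5): if $\mathbf u\in\mathcal H^4$, then by $H^4(0,1)\hookrightarrow C^1[0,1]$ applied to $\mathbf u$, and via items (3) and (4) with $k=1$ applied to $\mathbf u''$ and to $L_0\mathbf u=\tilde L_0\mathbf u$, the quantities $D^2L_0\mathbf u=(\tilde L_0\mathbf u)''$ and $L_0D^2\mathbf u=\tilde L_0(\mathbf u'')$ are genuine componentwise derivatives, so the claimed identity reduces to the pointwise operator relation $[\partial_\rho^2,\,-\rho\,\partial_\rho]=-2\,\partial_\rho^2$ on $C^\infty$ functions, checked in two lines with the Leibniz rule, together with the fact that $\partial_\rho^2$ commutes with $\partial_\rho$ and with the constant matrix appearing in $\tilde L_0$. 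The main obstacle is the density step in item (2); everything else is routine.
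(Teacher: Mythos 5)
Your proof is correct. The paper dismisses this lemma with a one-line appeal to ``well known properties of Sobolev spaces,'' so there is no written argument to compare against; your write-up supplies exactly the details that need care, namely the equivalence of $\|\mathbf u\|_{\mathcal H}+\|D^{2k}\mathbf u\|_{\mathcal H}$ with the full $H^{2k}\times H^{2k}$ norm via interpolation (without which completeness in item (1) and the closedness of the boundary-condition functionals would not follow from the stated inner product) and the two-step density argument with the rescaled cutoff. The only point to tighten is that $\eta$ must also vanish outside a bounded neighbourhood of $0$ (not merely equal $1$ near $0$), since otherwise the $\ell=0$ term $\eta(\rho/\varepsilon)$ and the correction $\rho^{2k}\eta(\rho/\varepsilon)$ itself have $L^2(0,1)$ norms bounded away from zero; your own $O(\varepsilon^{1/2})$ scaling computation already presupposes this, so it is a matter of stating the hypothesis, not of repairing the argument.
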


\begin{proof}
The proof is straightforward by inserting the definitions and using well known
properties of Sobolev spaces.
\end{proof}

As usual we define the part $L_{0,k}$ of $L_0$ in $\mathcal{H}^{2k}$ by
$\mathcal{D}(L_{0,k}):=\{\mathbf{u}\in \mathcal{D}(L_0) \cap \mathcal{H}^{2k}:
L_0 \mathbf{u} \in \mathcal{H}^{2k}\}$ and $L_{0,k}\mathbf{u}:=L_0 \mathbf{u}$. 
We show that $L_{0,k}$ generates a semigroup
on $\mathcal{H}^{2k}$.

\begin{proposition}
\label{prop_sgl0k}
The operator $L_{0,k}$ generates a strongly continuous one--parameter semigroup
$S_{0,k}: [0,\infty) \to \mathcal{B}(\mathcal{H}^{2k})$ satisfying
$\|S_{0,k}\|_{\mathcal{B}(\mathcal{H}^{2k})} \leq e^{\frac{1}{2}\tau}$.
\end{proposition}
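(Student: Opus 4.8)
The plan is to verify the hypotheses of the Hille--Yosida generation theorem for $L_{0,k}$ on the Hilbert space $\mathcal{H}^{2k}$, with constant $M=1$ and growth bound $\omega=\tfrac12$; since $\|R_{L_{0,k}}(\lambda)^n\|_{\mathcal{B}(\mathcal{H}^{2k})}\le\|R_{L_{0,k}}(\lambda)\|_{\mathcal{B}(\mathcal{H}^{2k})}^n$, it is enough to show that $L_{0,k}$ is densely defined and that every $\lambda\in\mathbb{C}$ with $\mathrm{Re}\,\lambda>\tfrac12$ belongs to $\rho(L_{0,k})$ with $\|R_{L_{0,k}}(\lambda)\|_{\mathcal{B}(\mathcal{H}^{2k})}\le(\mathrm{Re}\,\lambda-\tfrac12)^{-1}$; closedness is then automatic since the resolvent set is nonempty. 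The input from Theorem~\ref{thm_free} is that the bound $\|S_0(\tau)\|_{\mathcal{B}(\mathcal{H})}\le e^{\tau/2}$ gives, via the Laplace representation $R_{L_0}(\lambda)=\int_0^\infty e^{-\lambda\tau}S_0(\tau)\,d\tau$, that $\{\mathrm{Re}\,\lambda>\tfrac12\}\subset\rho(L_0)$ with $\|R_{L_0}(\lambda)\|_{\mathcal{B}(\mathcal{H})}\le(\mathrm{Re}\,\lambda-\tfrac12)^{-1}$.

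The heart of the matter is to transport this to $\mathcal{H}^{2k}$ by means of the commutator identity in Lemma~\ref{lem_elem}. Written as $D^2(\lambda-L_0)\mathbf{u}=(\lambda+2-L_0)D^2\mathbf{u}$ on $\mathcal{H}^4$, it says that $D^2$ intertwines $\lambda-L_0$ with $\lambda+2-L_0$, hence formally $D^2 R_{L_0}(\lambda)=R_{L_0}(\lambda+2)D^2$. I would first establish, as the one genuinely analytic ingredient, the regularity statement that for $\mathbf{u}\in\mathcal{H}^2$ and $\mathrm{Re}\,\lambda>\tfrac12$ one has $R_{L_0}(\lambda)\mathbf{u}\in\mathcal{H}^2$ together with $D^2 R_{L_0}(\lambda)\mathbf{u}=R_{L_0}(\lambda+2)D^2\mathbf{u}$; then an induction on $k$ (using that $D^{2k}$ maps $\mathcal{H}^{2(k+1)}$ into $\mathcal{H}^2$, which is Lemma~\ref{lem_elem}(3) iterated, and that $\mathrm{Re}(\lambda+2j)>\tfrac12$ for all $j\ge0$) upgrades this to
\begin{equation*}
R_{L_0}(\lambda)\,\mathcal{H}^{2k}\subset\mathcal{H}^{2k},\qquad D^{2k}R_{L_0}(\lambda)\mathbf{u}=R_{L_0}(\lambda+2k)D^{2k}\mathbf{u}\quad(\mathbf{u}\in\mathcal{H}^{2k},\ \mathrm{Re}\,\lambda>\tfrac12),
\end{equation*}
the boundary conditions defining $\mathcal{H}^{2k}$ being produced one pair at a time by each application of the base step.

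Granting this, the rest is bookkeeping. For $\mathbf{u}\in\mathcal{H}^{2k}$ put $\mathbf{f}:=R_{L_0}(\lambda)\mathbf{u}\in\mathcal{H}^{2k}\cap\mathcal{D}(L_0)$; then $L_0\mathbf{f}=\lambda\mathbf{f}-\mathbf{u}\in\mathcal{H}^{2k}$, so $\mathbf{f}\in\mathcal{D}(L_{0,k})$ and $(\lambda-L_{0,k})\mathbf{f}=\mathbf{u}$, i.e.\ $\lambda-L_{0,k}$ is surjective; it is injective because $L_{0,k}\subset L_0$, hence $\lambda\in\rho(L_{0,k})$ with $R_{L_{0,k}}(\lambda)=R_{L_0}(\lambda)|_{\mathcal{H}^{2k}}$. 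The commutator identity then gives
\begin{align*}
\|R_{L_{0,k}}(\lambda)\mathbf{u}\|_{\mathcal{H}^{2k}}^2
&=\|R_{L_0}(\lambda)\mathbf{u}\|_{\mathcal{H}}^2+\|R_{L_0}(\lambda+2k)D^{2k}\mathbf{u}\|_{\mathcal{H}}^2\\
&\le(\mathrm{Re}\,\lambda-\tfrac12)^{-2}\bigl(\|\mathbf{u}\|_{\mathcal{H}}^2+\|D^{2k}\mathbf{u}\|_{\mathcal{H}}^2\bigr)=(\mathrm{Re}\,\lambda-\tfrac12)^{-2}\|\mathbf{u}\|_{\mathcal{H}^{2k}}^2,
\end{align*}
where we used $\mathrm{Re}(\lambda+2k)-\tfrac12\ge\mathrm{Re}\,\lambda-\tfrac12$ and the definition of $\|\cdot\|_{\mathcal{H}^{2k}}$. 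Finally $\mathcal{D}(L_{0,k})$ is dense in $\mathcal{H}^{2k}$ because it contains $\mathcal{H}^{2(k+1)}$ (by Lemma~\ref{lem_elem}(4) together with $\mathcal{H}^{2(k+1)}\subset\mathcal{H}^{2k}$), which is dense by Lemma~\ref{lem_elem}(2). The Hille--Yosida theorem now applies and yields the claim, with $\|S_{0,k}(\tau)\|_{\mathcal{B}(\mathcal{H}^{2k})}\le e^{\tau/2}$.

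The main obstacle is the base regularity step, i.e.\ showing that $R_{L_0}(\lambda)$ gains two derivatives on $\mathcal{H}^2$: a priori $R_{L_0}(\lambda)\mathbf{u}$ lies only in $\mathcal{D}(L_0)$, and the difficulty is that the first--order operator underlying $L_0$ degenerates at the characteristic point $\rho=1$, so one cannot simply invert its symbol to bootstrap. I expect this to be handled by combining the explicit description of the closure $L_0$ from \cite{roland1} with the commutator structure of Lemma~\ref{lem_elem} --- concretely, by verifying $D^2 R_{L_0}(\lambda)\mathbf{u}=R_{L_0}(\lambda+2)D^2\mathbf{u}$ first for $\mathbf{u}$ in a dense, sufficiently regular subspace of $\mathcal{H}^2$ on which Lemma~\ref{lem_elem}(5) applies directly, and then passing to the limit using the closedness of the (distributional) operator $D^2$ on $\mathcal{H}$ and the boundedness of $\mathbf{u}\mapsto R_{L_0}(\lambda+2)D^2\mathbf{u}$ from $\mathcal{H}^2$ into $\mathcal{H}$, noting that convergence in $\mathcal{H}^2$ controls the relevant traces at $\rho=0$ (equivalently, that $\|\cdot\|_{\mathcal{H}^{2k}}$ is equivalent to the $H^{2k}\times H^{2k}$ norm on $\mathcal{H}^{2k}$, which already underlies Lemma~\ref{lem_elem}(1)).
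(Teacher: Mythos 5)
Your overall strategy --- Hille--Yosida on $\mathcal{H}^{2k}$ via the resolvent intertwining $D^{2k}R_{L_0}(\lambda)=R_{L_0}(\lambda+2k)D^{2k}$ --- differs from the paper's, which uses Lumer--Phillips: the paper proves the dissipativity estimate $\mathrm{Re}(L_{0,k}\mathbf{u}|\mathbf{u})_{\mathcal{H}^{2k}}\le\frac{1}{2}\|\mathbf{u}\|_{\mathcal{H}^{2k}}^2$ directly from the commutator relation and integration by parts on the dense subspace $\mathcal{H}^{2(k+1)}$, and then verifies the range condition only for the single value $\lambda=1$ and only up to density, by writing down an explicit solution of $(1-L_0)\mathbf{u}=\mathbf{f}$ for smooth $\mathbf{f}$ (via $F(\rho)=f_1(\rho)+\rho f_2(\rho)+\int_0^\rho f_2$, $u_2=(1-\rho^2)^{-1}\int_\rho^1 F$, etc.) and checking the boundary conditions by a parity/Taylor argument. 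The genuine gap in your version is exactly the step you yourself flag as ``the main obstacle'': that $R_{L_0}(\lambda)$ maps $\mathcal{H}^2$ into $\mathcal{H}^2$ with $D^2R_{L_0}(\lambda)\mathbf{u}=R_{L_0}(\lambda+2)D^2\mathbf{u}$. Your proposed route to it is circular as sketched. Lemma~\ref{lem_elem}(5) is an identity for the operator on $\mathcal{H}^4$; to turn it into a resolvent identity you must either apply it to $\mathbf{v}=R_{L_0}(\lambda)\mathbf{u}$, which presupposes $R_{L_0}(\lambda)\mathbf{u}\in\mathcal{H}^4$ (the very regularity at issue, and not automatic since the operator degenerates at the characteristic point $\rho=1$), or verify the identity on the range $(\lambda-L_0)\mathcal{H}^4$ and pass to the limit, which requires knowing that this range is dense in $\mathcal{H}^2$ --- again a range condition of precisely the kind the paper settles by the explicit formula. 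The Laplace-transform representation of the resolvent does not rescue this either, because the invariance of $\mathcal{H}^{2k}$ under $S_0(\tau)$ (Lemma~\ref{lem_h2kinv}) is \emph{derived from} this proposition in the paper, not available as an input to it.

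Granted that base step, the rest of your argument (surjectivity of $\lambda-L_{0,k}$, the norm computation giving $\|R_{L_{0,k}}(\lambda)\|\le(\mathrm{Re}\,\lambda-\frac{1}{2})^{-1}$, density of the domain via $\mathcal{H}^{2(k+1)}\subset\mathcal{D}(L_{0,k})$) is correct. But closing the gap forces you to solve the resolvent equation explicitly with regularity control, at which point you have reconstructed the paper's range argument; the Lumer--Phillips formulation packages this more economically, since it needs only one value of $\lambda$ and only a dense set of right-hand sides.
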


\begin{proof}
By Lemma \ref{lem_elem} we immediately observe that $L_{0,k}$ is densely
defined since $\mathcal{H}^{2(k+1)} \subset \mathcal{D}(L_{0,k})$.

Let $(\mathbf{u}_j) \subset \mathcal{D}(L_{0,k})$ with $\mathbf{u}_j \to
\mathbf{u}$ and $L_{0,k}\mathbf{u}_j \to \mathbf{f}$ both in $\mathcal{H}^{2k}$.
Since $\mathcal{H}^{2k} \hookrightarrow \mathcal{H}$ (Lemma \ref{lem_elem}) this
implies $\mathbf{u}_j \to \mathbf{u}$, $L_0 \mathbf{u}_j \to \mathbf{f}$ in 
$\mathcal{H}$ and by the closedness
of $L_0$ we conclude $\mathbf{u} \in \mathcal{D}(L_0) \cap \mathcal{H}^{2k}$ and
$L_0 \mathbf{u}=\mathbf{f} \in \mathcal{H}^{2k}$ which
shows $\mathbf{u} \in \mathcal{D}(L_{0,k})$ and we have proved that $L_{0,k}$ is
closed.

By using the commutator relation from Lemma \ref{lem_elem} and 
integration by parts 
(cf. \cite{roland1}) we obtain
$$
\mathrm{Re}(L_{0,k}\mathbf{u}|\mathbf{u})_{\mathcal{H}^{2k}}
=\mathrm{Re}\left((L_0\mathbf{u}|\mathbf{u})_\mathcal{H}+
(L_0 D^{2k} \mathbf{u}|D^{2k} \mathbf{u})_\mathcal{H}
-2k\|D^{2k} 
\mathbf{u}\|_\mathcal{H}^2 \right ) 
\leq \frac{1}{2}
\|\mathbf{u}\|_{\mathcal{H}^{2k}}^2 
$$
for all $\mathbf{u} \in \mathcal{H}^{2(k+1)}$ and by a density argument this
estimate holds in fact for all $\mathbf{u} \in \mathcal{D}(L_{0,k})$.

Let $\mathbf{f} \in \mathcal{H}^{2k} \cap C^\infty(0,1)^2$ and define 
$F(\rho):=f_1(\rho)+\rho
f_2(\rho)+\int_0^\rho f_2(\xi)d\xi$, $u_2(\rho):=\frac{1}{1-\rho^2}\int_\rho^1
F(\xi)d\xi$ and $u_1(\rho):=\rho u_2(\rho)-\int_0^\rho f_2(\xi)d\xi$.
Then the Taylor series
expansion for $u_1$ around $\rho=0$ up to order $2k-1$ 
contains only odd powers of $\rho$ whereas the analogous series for $u_2$
up to order $2k$ contains only even powers of $\rho$.
This shows that $\mathbf{u}$ satisfies the appropriate boundary conditions at
$\rho=0$ and we conclude that $\mathbf{u} \in \mathcal{H}^{2k} \cap
\mathcal{D}(L_0)$. 
Furthermore, a direct computation yields $(1-L_0)\mathbf{u}=\mathbf{f}$ which
shows that $\mathbf{u} \in \mathcal{D}(L_{0,k})$ and
$1-L_{0,k}$ has dense range.

Invoking the Lumer--Phillips Theorem (see e.g. \cite{engel}, p. 56, Theorem
4.2.6) finishes the proof.
\end{proof}

Based on this result we are able to conclude the invariance of $\mathcal{H}^{2k}$ under the semigroup $S_0$.

\begin{lemma}
\label{lem_h2kinv}
The space $\mathcal{H}^{2k}$ is $L_0$--admissible, i.e., it is an invariant
subspace of $S_0(\tau)$, $\tau>0$, and the restriction of $S_0(\tau)$ to
$\mathcal{H}^{2k}$ is a strongly continuous semigroup on $\mathcal{H}^{2k}$
satisfying $\|S_0(\tau)\mathbf{u}\|_{\mathcal{H}^{2k}} \leq
e^{\frac{1}{2}\tau}\|\mathbf{u}\|_{\mathcal{H}^{2k}}$ for all $\mathbf{u}
\in \mathcal{H}^{2k}$ and $\tau>0$.
\end{lemma}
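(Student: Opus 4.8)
The plan is to show that the semigroup $S_{0,k}$ furnished by Proposition \ref{prop_sgl0k} is precisely the restriction of $S_0$ to $\mathcal{H}^{2k}$; once this is established, the admissibility statement, the strong continuity and the growth bound are all immediate consequences of Proposition \ref{prop_sgl0k} and the continuity of the embedding $\mathcal{H}^{2k} \hookrightarrow \mathcal{H}$ from Lemma \ref{lem_elem}.

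First I would compare the two resolvents. By Theorem \ref{thm_free} the half--plane $\{\mathrm{Re}\,\lambda > \tfrac12\}$ is contained in $\rho(L_0)$, and by Proposition \ref{prop_sgl0k} together with the Hille--Yosida theorem the same half--plane is contained in $\rho(L_{0,k})$. Fix such a $\lambda$ and let $\mathbf{u} \in \mathcal{H}^{2k}$. Put $\mathbf{v} := R_{L_{0,k}}(\lambda)\mathbf{u}$. Since $L_{0,k}$ is by definition the part of $L_0$ in $\mathcal{H}^{2k}$, we have $\mathbf{v} \in \mathcal{D}(L_{0,k}) \subset \mathcal{D}(L_0)$ and $(\lambda - L_0)\mathbf{v} = (\lambda - L_{0,k})\mathbf{v} = \mathbf{u}$; applying $R_{L_0}(\lambda)$ yields $\mathbf{v} = R_{L_0}(\lambda)\mathbf{u}$. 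Hence $R_{L_0}(\lambda)\mathbf{u} = R_{L_{0,k}}(\lambda)\mathbf{u}$ for every $\mathbf{u} \in \mathcal{H}^{2k}$, and iterating this identity gives $R_{L_0}(\lambda)^n \mathbf{u} = R_{L_{0,k}}(\lambda)^n \mathbf{u} \in \mathcal{H}^{2k}$ for all $n \in \mathbb{N}$.

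Next I would pass from resolvents to semigroups through the exponential formula (cf. \cite{engel}). For $\tau > 0$ and $\mathbf{u} \in \mathcal{H}^{2k}$ Proposition \ref{prop_sgl0k} gives
$$ S_{0,k}(\tau)\mathbf{u} = \lim_{n \to \infty} \left(\tfrac{n}{\tau}\, R_{L_{0,k}}\big(\tfrac{n}{\tau}\big)\right)^{\!n} \mathbf{u} $$
with convergence in $\mathcal{H}^{2k}$, hence also in $\mathcal{H}$ by the continuous embedding. By the resolvent identity just established the $n$--th approximant equals $\big(\tfrac{n}{\tau} R_{L_0}(\tfrac{n}{\tau})\big)^n \mathbf{u}$, and this sequence converges in $\mathcal{H}$ to $S_0(\tau)\mathbf{u}$ by the exponential formula applied to $S_0$. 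Uniqueness of limits in $\mathcal{H}$ then forces $S_0(\tau)\mathbf{u} = S_{0,k}(\tau)\mathbf{u}$ for all $\mathbf{u} \in \mathcal{H}^{2k}$ and all $\tau \geq 0$. Consequently $S_0(\tau)\mathcal{H}^{2k} \subset \mathcal{H}^{2k}$, the restriction $S_0(\tau)|_{\mathcal{H}^{2k}}$ coincides with $S_{0,k}(\tau)$ and is thus a strongly continuous semigroup on $\mathcal{H}^{2k}$, and $\|S_0(\tau)\mathbf{u}\|_{\mathcal{H}^{2k}} = \|S_{0,k}(\tau)\mathbf{u}\|_{\mathcal{H}^{2k}} \leq e^{\tau/2}\|\mathbf{u}\|_{\mathcal{H}^{2k}}$.

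The one point that requires care is exactly this identification. Because $\mathcal{H}^{2k}$ is \emph{not} a closed subspace of $\mathcal{H}$, invariance cannot be read off from $\mathcal{H}$--convergence of the resolvent powers alone; it is the prior construction of the semigroup $S_{0,k}$ on $\mathcal{H}^{2k}$ in Proposition \ref{prop_sgl0k}, combined with the continuous embedding, that makes the limiting argument legitimate. Everything else is routine bookkeeping. One could alternatively appeal to the abstract theory of admissible subspaces, but the self--contained argument sketched here is equally short.
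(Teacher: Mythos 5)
Your argument is correct and follows essentially the same route as the paper: the core step in both is to observe that for $\lambda$ in the common resolvent set, $R_{L_{0,k}}(\lambda)\mathbf{f}=R_{L_0}(\lambda)\mathbf{f}$ for $\mathbf{f}\in\mathcal{H}^{2k}$ (using $L_{0,k}\subset L_0$ and Proposition \ref{prop_sgl0k}), so that the resolvent of $L_0$ preserves $\mathcal{H}^{2k}$. The only difference is that the paper then cites the abstract admissibility theorem (Pazy, Theorem 5.5) together with the continuous embedding from Lemma \ref{lem_elem}, whereas you reprove that theorem in this special case by passing from resolvents to semigroups via the exponential formula --- a legitimate, self-contained unwinding of the same argument.
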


\begin{proof}
Let $\mathbf{f} \in \mathcal{H}^{2k}$ and $\lambda \in \rho(L_0)$.
Proposition \ref{prop_sgl0k} implies that there exists a $\mathbf{u} \in
\mathcal{D}(L_{0,k})$ such that $(\lambda-L_{0,k})\mathbf{u}=\mathbf{f}$.
However, since $L_{0,k} \subset L_0$, we have
$(\lambda-L_0)\mathbf{u}=\mathbf{f}$ and thus,
$R_{L_0}(\lambda)\mathbf{f}=\mathbf{u} \in \mathcal{H}^{2k}$.
This shows that $R_{L_0}(\lambda)\mathcal{H}^{2k} \subset \mathcal{H}^{2k}$.
By Lemma \ref{lem_elem}, the embedding $\mathcal{H}^{2k} \subset \mathcal{H}$ is 
continuous and therefore, the claim follows from Proposition \ref{prop_sgl0k}
and the theorem on admissible spaces (see e.g. \cite{pazy}, p. 123, Theorem 5.5).
\end{proof}

\subsection{Decomposition}
We improve the growth estimate 
$\|S_0(\tau)|_{\mathcal{H}^{2k}}\|_{\mathcal{B}(\mathcal{H}^{2k})} \leq
e^{\frac{1}{2} \tau}$ by a decomposition of the initial data space
$\mathcal{H}^{2k}$.
Let $\mathcal{N}$ denote the set of all $\mathbf{u}\in \mathcal{H}^{2k}$ such
that $D^{2k}\mathbf{u}=0$.
$\mathcal{N}$ is a finite dimensional (and hence closed) subspace of
$\mathcal{H}^{2k}$.

\begin{lemma}
\label{lem_s0ker}
The subspace $\mathcal{N}$ is spanned by $2k$ analytic 
functions $\mathbf{u}(\cdot, \lambda_j)$, $j=1,2,\dots,2k$ 
where each
$\mathbf{u}(\cdot, \lambda_j)$ is an eigenfunction of $L_0$ with eigenvalue 
$\lambda_j=-j+1$.
Furthermore, we have
\[ u_1^{(j-1)}(1,\lambda_j)+u_2^{(j-1)}(1,\lambda_j)\not= 0 \]
for all $j=1,2,\dots,2k$.
\end{lemma}

\begin{proof}
For $j=1,2, \dots, 2k$ set $u(\rho, \lambda_j):=(1-\rho)^{1-\lambda_j}-
(1+\rho)^{1-\lambda_j}$,
$u_1(\rho,\lambda_j):=\rho u'(\rho,\lambda_j)+(\lambda_j-1)u(\rho, \lambda_j)$ and 
$u_2(\rho, \lambda_j):=u'(\rho,\lambda_j)$.
Then $\mathbf{u}(\cdot, \lambda_j)$ is an eigenfunction of $L_0$ with 
eigenvalue $\lambda_j$ as
a straightforward computation shows.
Observe that $u(\cdot, \lambda_j)$ is an odd function 
(binomial theorem, all even powers
cancel) and therefore, $u_1(\cdot, \lambda_j)$ is odd and 
$u_2(\cdot, \lambda_j)$ is even.
This shows that $\mathbf{u}(\cdot, \lambda_j) \in \mathcal{H}^{2k}$ for each 
$j$.
Note further that $u_1(\cdot, \lambda_j)$ and $u_2(\cdot,\lambda_j)$ are 
polynomials of degree
strictly smaller than
$2k$ and thus, $D^{2k}\mathbf{u}(\cdot, \lambda_j)=0$ for each $j$ and we conclude
$\mathbf{u}(\cdot,\lambda_j) \in \mathcal{N}$.
A function in $\mathcal{H}^{2k}$ satisfies $2k$ boundary conditions
and thus, the space $\mathcal{N}$, which consists of 
$\mathbf{u} \in \mathcal{H}^{2k}$ 
with
$D^{2k}\mathbf{u}=0$, is $2k$--dimensional.
However, the $2k$ eigenfunctions $\mathbf{u}(\cdot,\lambda_j)$, 
$j=1,2,\dots,2k$
belong to $\mathcal{N}$ and they are linearly independent.
Finally, a straightforward computation yields
\[ u_1^{(j-1)}(1,\lambda_j)+u_2^{(j-1)}(1,\lambda_j)=2(-1)^j j!. \]
\end{proof}

In what follows we always assume that the eigenfunctions are normalized such
that $\|\mathbf u(\cdot,\lambda_j)\|_{\mathcal H^{2k}}=1$ and we set 
$\gamma_j:=u_1^{(j-1)}(1,\lambda_j)+u_2^{(j-1)}(1,\lambda_j)\not=0$.
For $j=1,2,\dots,2k$ we define
\[ P_j \mathbf f:=\frac{f_1^{(j-1)}(1)+f_2^{(j-1)}(1)}{\gamma_j}\mathbf{u}(\cdot,\lambda_j). \]
By Sobolev embedding it follows that $P_j$ is a bounded linear operator on $\mathcal H^{2k}$
and $P_j^2=P_j$, i.e., 
$P_j$ is a projection onto the subspace spanned by the $j$-th analytic eigenfunction.
We set $P:=\sum_{j=1}^{2k}P_j$ and remark that $P$ is also a projection
since $P_j P_\ell=0$ if $j\not= \ell$ which follows easily from the explicit form
of $\mathbf{u}(\cdot,\lambda_j)$. 
Note that the time evolution $S_0(\tau)P\mathbf{u}$
can be calculated explicitly since
\[ S_0(\tau)P\mathbf{u}=\sum_{j=1}^{2k} c_j S_0(\tau) \mathbf{u}(\cdot,\lambda_j)
=\sum_{j=1}^{2k}c_j e^{\lambda_j \tau}\mathbf{u}(\cdot,\lambda_j). \]
This suggests to decompose the space $\mathcal{H}^{2k}$ as
$\mathcal{H}^{2k}=\rg P \oplus \rg (I-P)$.
We set $\mathcal M:=\rg(I-P)$ and note that $\mathcal M\subset \mathcal H^{2k}$ is closed
since $P$ is a bounded projection on $\mathcal{H}^{2k}$.
Now we estimate the time evolution $S_0(\tau)(I-P)\mathbf{u}$.

\begin{proposition}
\label{prop_s0res}
\begin{enumerate}
\item The subspace $\mathcal{M} \subset \mathcal{H}^{2k}$ is invariant
under the semigroup $S_0$.
\item The mapping $\mathbf{u} \mapsto \|D^{2k}\mathbf{u}\|_\mathcal{H}$ defines
a norm on $\mathcal{M}$ which is equivalent to
$\|\cdot\|_{\mathcal{H}^{2k}}$.
\item For $\mathbf{u} \in \mathcal{M}$ we have the estimate
$\|S_0(\tau)\mathbf{u}\|_{\mathcal{H}^{2k}}\lesssim e^{(\frac{1}{2}-2k)\tau}
\|\mathbf{u}\|_{\mathcal{H}^{2k}}$
for all $\tau>0$.
\end{enumerate}
\end{proposition}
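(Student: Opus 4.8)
The plan is to prove the three claims in order, since each builds on the previous one.

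For (1), $\mathcal{N}^\perp$-invariance: I would argue that $\mathcal{N} = \ker(D^{2k}|_{\mathcal{H}^{2k}})$ is spanned by eigenfunctions of $L_0$ (Lemma \ref{lem_s0ker}), hence $\mathcal{N}$ itself is $S_0$-invariant. But orthogonal complements of invariant subspaces are generally \emph{not} invariant for non-normal semigroups, so I cannot simply dualize. Instead I would use the commutator relation from Lemma \ref{lem_elem}, $D^{2k} L_0 = L_0 D^{2k} - 2k\, D^{2k}$ (iterating the $k=1$ relation), to show that $D^{2k}$ \emph{intertwines} the semigroups: for $\mathbf{u} \in \mathcal{H}^{2(k+1)}$ one gets $D^{2k} S_{0,k}(\tau)\mathbf{u} = e^{-2k\tau} S_0(\tau) D^{2k}\mathbf{u}$, first formally and then by density/closedness. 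The point is that $\mathbf{u} \in \mathcal{N}^\perp$ is characterized (by claim (2)) by $\|\mathbf{u}\|_{\mathcal{H}^{2k}} \lesssim \|D^{2k}\mathbf{u}\|_\mathcal{H}$, i.e. by $\mathbf{u}$ being controlled by its top derivative; but that alone does not give invariance either. The cleanest route: $\mathcal{N}^\perp$ is the orthogonal complement of a finite-dimensional \emph{spectral} subspace, so I would instead exhibit $\mathcal{N}^\perp$ as the range of a complementary spectral projection that commutes with $S_0$. Concretely, since each $\lambda_j = -j+1$, $j=1,\dots,2k$, is an isolated eigenvalue of the generator $L_{0,k}$ of $S_{0,k}$ on $\mathcal{H}^{2k}$ (the surrounding continuum spectrum of $L_0$ having been removed by passing to $\mathcal{H}^{2k}$ — this needs checking via the resolvent construction in Proposition \ref{prop_sgl0k}), the Riesz projection onto the complement of $\{\lambda_1,\dots,\lambda_{2k}\}$ gives a closed $S_{0,k}$-invariant complement to $\mathcal{N}$, and I would then identify this complement with $\mathcal{N}^\perp$ (both are $2k$-codimensional; equality follows if the decomposition is orthogonal, which one checks against the $\mathcal{H}^{2k}$ inner product).

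For (2), norm equivalence on $\mathcal{N}^\perp$: this is essentially Lemma \ref{lem_triv} applied to $A := D^{2k}: \mathcal{H}^{2k} \to \mathcal{H}$. I need $A$ bounded (immediate from the definition of $\|\cdot\|_{\mathcal{H}^{2k}}$) and surjective. Surjectivity: given $\mathbf{f} \in \mathcal{H}$, I must produce $\mathbf{u} \in \mathcal{H}^{2k}$ with $\mathbf{u}'' \cdots {}'' = \mathbf{f}$ ($2k$ derivatives) satisfying the boundary conditions $u_1^{(2j)}(0) = u_2^{(2j+1)}(0) = 0$ for $j < k$; one builds $\mathbf{u}$ by iterated integration from $0$, which automatically produces a Taylor expansion with the right parity of the low-order terms (odd powers for $u_1$, even for $u_2$), so the boundary conditions hold. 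Then $\ker A = \mathcal{N}$ by definition, and Lemma \ref{lem_triv} yields $\|D^{2k}\mathbf{u}\|_\mathcal{H} \geq c\|\mathbf{u}\|_{\mathcal{H}^{2k}}$ on $(\ker A)^\perp = \mathcal{N}^\perp$; the reverse inequality $\|D^{2k}\mathbf{u}\|_\mathcal{H} \leq \|\mathbf{u}\|_{\mathcal{H}^{2k}}$ is trivial from the definition of the inner product. That $\mathbf{u} \mapsto \|D^{2k}\mathbf{u}\|_\mathcal{H}$ is genuinely a norm (not just a seminorm) on $\mathcal{N}^\perp$ follows since it vanishes only on $\mathcal{N}$.

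For (3), the decay estimate: here is where the intertwining identity pays off. For $\mathbf{u} \in \mathcal{N}^\perp$, using (2) and the relation $D^{2k} S_0(\tau)\mathbf{u} = e^{-2k\tau} S_0(\tau) D^{2k}\mathbf{u}$ together with part (1) (so $S_0(\tau)\mathbf{u} \in \mathcal{N}^\perp$, allowing (2) to be applied to it), I compute
\[
\|S_0(\tau)\mathbf{u}\|_{\mathcal{H}^{2k}} \lesssim \|D^{2k} S_0(\tau)\mathbf{u}\|_\mathcal{H} = e^{-2k\tau}\|S_0(\tau) D^{2k}\mathbf{u}\|_\mathcal{H} \leq e^{-2k\tau} e^{\frac{1}{2}\tau}\|D^{2k}\mathbf{u}\|_\mathcal{H} \lesssim e^{(\frac{1}{2}-2k)\tau}\|\mathbf{u}\|_{\mathcal{H}^{2k}},
\]
where the middle inequality uses the $\mathcal{H}$-growth bound $\|S_0(\tau)\|_{\mathcal{B}(\mathcal{H})} \leq e^{\frac{1}{2}\tau}$ from Theorem \ref{thm_free}, and the last step is again (2). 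The main obstacle is step (1): making rigorous that $\mathcal{N}^\perp$ (an orthogonality notion) coincides with the dynamically invariant complement produced by the commutator/spectral argument, and justifying the intertwining identity $D^{2k} S_{0,k}(\tau) = e^{-2k\tau} S_0(\tau) D^{2k}$ on a dense set and then extending it — this requires care because $D^{2k}$ is unbounded as a map $\mathcal{H}^{2k} \to \mathcal{H}$ only in the sense of mapping one Hilbert space into another, but its graph-closedness and the density of $\mathcal{H}^{2(k+1)}$ should suffice. Everything else is routine integration-by-parts and finite-dimensional linear algebra.
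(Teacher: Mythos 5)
Parts (2) and (3) of your proposal are essentially fine. Part (2) is exactly the paper's argument: Lemma \ref{lem_triv} applied to the bounded surjection $D^{2k}:\mathcal{H}^{2k}\to\mathcal{H}$, with surjectivity via iterated integration from $0$. Part (3) takes a legitimately different route from the paper: you use the integrated (semigroup) form of the commutator relation, i.e.\ the intertwining $D^{2k}S_0(\tau)\mathbf{u}=e^{-2k\tau}S_0(\tau)D^{2k}\mathbf{u}$, together with the $\mathcal{H}$-growth bound from Theorem \ref{thm_free}; the paper instead works at the level of the generator, equipping $\mathcal{N}^\perp$ with the inner product $(D^{2k}\mathbf{u}|D^{2k}\mathbf{v})_\mathcal{H}$ and showing $\mathrm{Re}(L_{0,\mathcal{N}^\perp}\mathbf{u}|\mathbf{u})_{\mathcal{N}^\perp}\le(\tfrac12-2k)\|\mathbf{u}\|_{\mathcal{N}^\perp}^2$, whence the bound follows from the Lumer--Phillips/dissipativity machinery. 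Both are correct once (1) and (2) are available; yours trades the dissipativity computation for the (routine but not free) justification of the intertwining identity on a dense subspace and its extension by closedness.

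The genuine gap is in part (1). You rightly observe that invariance of $\mathcal{N}$ does not give invariance of $\mathcal{N}^\perp$ for a non-normal semigroup, but your proposed fix --- a Riesz spectral projection complementary to $\{\lambda_1,\dots,\lambda_{2k}\}$, whose range you then hope to identify with $\mathcal{N}^\perp$ "if the decomposition is orthogonal" --- does not close it. First, it presupposes that each $\lambda_j$ is an isolated point of $\sigma(L_{0,k})$; this is established nowhere (the paper never computes $\sigma(L_{0,k})$, and on $\mathcal{H}$ these points lie inside a half-plane filled with eigenvalues), and proving it would be a substantial detour. Second, and more seriously, the Riesz decomposition of a non-normal operator is generically \emph{not} orthogonal, so there is no reason for the spectral complement to coincide with $\mathcal{N}^\perp$; asserting that one "checks" orthogonality against the $\mathcal{H}^{2k}$ inner product is precisely the missing content, not a verification. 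The paper avoids spectral theory entirely: for $\lambda\in\rho(L_0)$ and $\mathbf{f}\in\mathcal{N}^\perp$, $\mathbf{f}\ne 0$, set $\mathbf{u}:=R_{L_0}(\lambda)\mathbf{f}\in\mathcal{H}^{2k}$; if $\mathbf{u}$ were in $\mathcal{N}$, the commutator relation would give $D^{2k}\mathbf{f}=D^{2k}(\lambda-L_0)\mathbf{u}=0$, i.e.\ $\mathbf{f}\in\mathcal{N}\cap\mathcal{N}^\perp=\{0\}$, a contradiction; hence $R_{L_0}(\lambda)\mathcal{N}^\perp\subset\mathcal{N}^\perp$, and since $\mathcal{N}^\perp$ is closed, semigroup invariance follows from the standard theorem that resolvent-invariant closed subspaces are semigroup-invariant (\cite{pazy}, p.~121, Theorem 5.1). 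You should replace your spectral-projection argument for (1) by this direct resolvent-invariance argument.
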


\begin{proof}
\begin{enumerate}
\item 
It is straightforward to check that $L_0 P_j \mathbf u=P_j L_0 \mathbf u$ for all
$\mathbf u \in \mathcal{H}^{2k}\cap \mathcal{D}(L_0)$ and $j=1,2,\dots,2k$.
Consequently, $R_{L_0}(\lambda)P\mathbf f=PR_{L_0}(\lambda)\mathbf{f}$ for all $\mathbf f\in 
\mathcal H^{2k}$ (recall that $R_{L_0}(\lambda)\mathcal H^{2k}\subset \mathcal H^{2k}$).
This implies $S_0(\tau)P \mathbf{f}=PS_0(\tau)\mathbf{f}$ for all $\mathbf f\in \mathcal H^{2k}$
and the invariance of $\mathcal M=\rg(I-P)$ follows.

\item The mapping $D^{2k}: \mathcal{H}^{2k} \to \mathcal{H}$ is linear, bounded
and surjective. 
Furthermore, $\ker D^{2k}=\rg P$ and thus, $D^{2k}|_{\mathcal {M}}: \mathcal M \to \mathcal H$
is bounded, linear, and bijective.
Since $\mathcal M$ is closed, the closed graph theorem implies
$\|D^{2k} \mathbf u\|_{\mathcal H}\gtrsim \|\mathbf u\|_{\mathcal H^{2k}}$
for all $\mathbf u \in \mathcal M$.
Trivially, we have $\|D^{2k}\mathbf{u}\|_\mathcal{H} \leq
\|\mathbf{u}\|_{\mathcal{H}^{2k}}$.

\item We equip $\mathcal{M}$ with the inner product
$(\mathbf{u}|\mathbf{v})_{\mathcal{M}}:=(D^{2k}
\mathbf{u}|D^{2k}\mathbf{v})_\mathcal{H}$. 
Then $\mathcal{M}$ is a Hilbert space and $S_0(\tau)|_{\mathcal{M}}$
defines a semigroup on $\mathcal{M}$ by assertions 1 and 2 from above. 
The generator of this semigroup is $L_{0, {\mathcal{M}}}$, the part of
$L_0$ in $\mathcal{M}$, and it satisfies
$$ \mathrm{Re}(L_{0,\mathcal{M}} \mathbf{u}|\mathbf{u})
_{\mathcal{M}} = \mathrm{Re}(L_0
D^{2k} \mathbf{u}|D^{2k}\mathbf{u})_\mathcal{H}-2k
\|\mathbf{u}\|_{\mathcal{M}}^2 \leq \left ( \frac{1}{2}-2k \right
)\|\mathbf{u}\|_{\mathcal{M}}^2 $$ 
for $\mathbf{u} \in \mathcal{D}(L_{0, \mathcal{M}})$
where we have used the commutator relation from Lemma \ref{lem_elem}
iteratively. This estimate implies
$\|S_0(\tau)\mathbf{u}\|_{\mathcal{M}}\leq
e^{(\frac{1}{2}-2k)\tau}\|\mathbf{u}\|_{\mathcal{M}}$.
However, by assertion 2 above, the norm
$\|\cdot\|_{\mathcal{M}}$ is equivalent to $\|\cdot\|_{\mathcal{H}^{2k}}$
and we arrive at the claim.
\end{enumerate}
\end{proof}

Proposition \ref{prop_s0res} implies that 
$\|S_0(\tau)(I-P)\mathbf{u}\|_{\mathcal{H}^{2k}}\lesssim 
e^{(\frac{1}{2}-2k)\tau}\|\mathbf{u}\|_{\mathcal{H}^{2k}}$ and this completes the investigation of the free wave equation in similarity coordinates. 
We end up with the result that the time evolution can be estimated as
$$ \|S_0(\tau)\mathbf{u}\|_{\mathcal{H}^{2k}}\lesssim \sum_{j=1}^{2k} c_j 
e^{\lambda_j \tau}
+e^{(\frac{1}{2}-2k)\tau}\|\mathbf{u}\|_{\mathcal{H}^{2k}} $$
for initial data $\mathbf{u} \in \mathcal{H}^{2k}$ and $\lambda_j=-j+1$.
This completely answers the question of the role of the analytic modes:
If the initial data are sufficiently regular then the long time behaviour of the solution is dominated by the first analytic eigenmodes. 
This is exactly what is observed numerically.
We also emphasize that this result implies a certain completeness property of the analytic modes:
Sufficiently regular initial data can be expanded in a sum of analytic modes plus a remainder which decays faster.
We summarize the results in a theorem.

\begin{theorem}
\label{thm_mainfree}
Let $\mathbf{u} \in \mathcal{H}^{2k}$. Then there exist constants
$c_1,\dots,c_{2k} \in \mathbb{C}$ and a function $\mathbf{g} \in
\mathcal{H}^{2k}$ such that 
$$ \mathbf{u}=\sum_{j=1}^{2k} c_j \mathbf{u}(\cdot, \lambda_j)+\mathbf{g} $$
and $\|S_0(\tau)\mathbf{g}\|_{\mathcal{H}^{2k}}\lesssim
e^{(\frac{1}{2}-2k)\tau}\|\mathbf{g}\|_{\mathcal{H}^{2k}}$ where
$\mathbf{u}(\cdot,\lambda_j)$ are normalized analytic eigenfunctions of $L_0$ with
eigenvalues $\lambda_j=-j+1$.
In particular, we have
$$ \|S_0(\tau)\mathbf{u}\|_{\mathcal{H}^{2k}}\lesssim \sum_{j=1}^{2k} 
c_j e^{\lambda_j \tau}
+e^{(\frac{1}{2}-2k)\tau}\|\mathbf{u}\|_{\mathcal{H}^{2k}} $$
for all $\tau>0$.
\end{theorem}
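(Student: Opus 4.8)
The plan is to assemble the statement directly from the decomposition machinery already set up in Section \ref{sec_invh2k}; there is essentially nothing new to prove, only to package the previous results. Given $\mathbf{u} \in \mathcal{H}^{2k}$, I would start from the orthogonal splitting $\mathbf{u} = P\mathbf{u} + (I-P)\mathbf{u}$ in $\mathcal{H}^{2k}$, where $P$ is the orthogonal projection onto the finite--dimensional subspace $\mathcal{N}=\ker D^{2k}$. By Lemma \ref{lem_s0ker}, $\mathcal{N}$ is spanned by the $2k$ analytic eigenfunctions $\mathbf{u}(\cdot,\lambda_j)$, $\lambda_j=-j+1$, which are moreover linearly independent, so they form a basis of $\mathcal{N}$ and $P\mathbf{u}$ can be written uniquely as $\sum_{j=1}^{2k} c_j \mathbf{u}(\cdot,\lambda_j)$ with $c_1,\dots,c_{2k}\in\mathbb{C}$. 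Setting $\mathbf{g}:=(I-P)\mathbf{u}\in\mathcal{N}^\perp$ then gives the asserted decomposition of $\mathbf{u}$.

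Next I would invoke Proposition \ref{prop_s0res}: part (1) ensures that $\mathcal{N}^\perp$ is $S_0$--invariant, so that $S_0(\tau)\mathbf{g}$ stays in $\mathcal{N}^\perp$, and part (3) gives at once $\|S_0(\tau)\mathbf{g}\|_{\mathcal{H}^{2k}}\lesssim e^{(\frac{1}{2}-2k)\tau}\|\mathbf{g}\|_{\mathcal{H}^{2k}}$, which is precisely the decay claim for the remainder term.

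For the pointwise bound I would apply $S_0(\tau)$ to the decomposition and use linearity. Each $\mathbf{u}(\cdot,\lambda_j)$ lies in $\mathcal{H}^{2k}$ and is an eigenfunction of $L_0$ with eigenvalue $\lambda_j$; since the restriction of $S_0(\tau)$ to $\mathcal{H}^{2k}$ is the semigroup generated by the part of $L_0$ in $\mathcal{H}^{2k}$ (Lemma \ref{lem_h2kinv}), it acts on these eigenfunctions by $S_0(\tau)\mathbf{u}(\cdot,\lambda_j)=e^{\lambda_j\tau}\mathbf{u}(\cdot,\lambda_j)$. Taking $\mathcal{H}^{2k}$--norms, applying the triangle inequality, using the normalization $\|\mathbf{u}(\cdot,\lambda_j)\|_{\mathcal{H}^{2k}}=1$, the reality of the $\lambda_j$, the estimate from Proposition \ref{prop_s0res} for the $\mathbf{g}$--term, and the fact that $\|\mathbf{g}\|_{\mathcal{H}^{2k}}=\|(I-P)\mathbf{u}\|_{\mathcal{H}^{2k}}\leq\|\mathbf{u}\|_{\mathcal{H}^{2k}}$ (an orthogonal projection is a contraction), yields $\|S_0(\tau)\mathbf{u}\|_{\mathcal{H}^{2k}}\lesssim\sum_{j=1}^{2k}c_j e^{\lambda_j\tau}+e^{(\frac{1}{2}-2k)\tau}\|\mathbf{u}\|_{\mathcal{H}^{2k}}$.

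There is no genuine obstacle in this theorem; the substance was already carried out in Lemma \ref{lem_s0ker} (the algebraic identification of $\mathcal{N}$ with the span of the analytic modes) and in Proposition \ref{prop_s0res} (the improved growth bound on $\mathcal{N}^\perp$). The only points requiring a little care are that all norms and the semigroup action are taken consistently in $\mathcal{H}^{2k}$ rather than in $\mathcal{H}$, and that the coefficients $c_j$ appearing in the final display absorb an implicit constant stemming from the choice of normalization of the eigenfunctions (so $c_j$ there should be read as $|c_j|$ up to a fixed multiplicative constant).
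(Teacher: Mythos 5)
Your proposal is correct and follows essentially the same route as the paper: the theorem is just the assembly of the orthogonal splitting $\mathbf{u}=P\mathbf{u}+(I-P)\mathbf{u}$, the identification of $\mathcal{N}=\ker D^{2k}$ with the span of the analytic eigenfunctions (Lemma \ref{lem_s0ker}), the explicit action $S_0(\tau)P\mathbf{u}=\sum_j c_j e^{\lambda_j\tau}\mathbf{u}(\cdot,\lambda_j)$, and the decay estimate on $\mathcal{N}^\perp$ from Proposition \ref{prop_s0res}. Your remark that the $c_j$ in the final display should be read as $|c_j|$ (up to the normalization constant) is a fair and correct reading of the paper's convention.
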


\section{Application to the semilinear wave equation}
We apply the previously obtained results to the linear stability problem for the fundamental
self--similar solution of Eq.~(\ref{eq_focus}).
To this end we construct a semigroup $S$ acting on $\mathcal{H}^{2k}$ that
describes the time evolution of linear perturbations of the fundamental
self--similar solution.
Througout this section we restrict ourselves to $k \in \mathbb{N}$ since
the case $k=0$ has already
been investigated in \cite{roland1}.

\subsection{Operator formulation}
Let $L' \in \mathcal{B}(\mathcal{H})$ be defined by 
$$ L' \mathbf{u}(\rho):=\left ( \begin{array}{c}pc_0 \int_0^\rho u_2(\xi)d\xi 
\\ 0 \end{array} \right ) $$
where the constant $c_0$ is given by the fundamental self--similar 
solution $\chi_T$.
Note that $L'$ leaves $\mathcal{H}^{2k}$ invariant as an odd--even argument
easily shows.
Furthermore, $D^{2k}$ and $L'$ commute, i.e.,
$D^{2k}L'\mathbf{u}=L'D^{2k}\mathbf{u}$ for all $\mathbf{u} \in
\mathcal{H}^{2k}$.

An operator formulation for the linear stability problem Eq.~(\ref{eq_lincss}) is given by
\begin{equation}
\label{eq_lincssop2}
\frac{d}{d\tau}\Phi(\tau)=L\Phi(\tau) 
\end{equation}
where $L:=L_0+L'$
and the Bounded Perturbation Theorem (see e.g. \cite{engel}) 
immediately yields the existence of a semigroup
$S: [0,\infty) \to \mathcal{B}(\mathcal{H})$ satisfying 
$\|S(\tau)\|_{\mathcal{B}(\mathcal{H})}\leq e^{(\frac{1}{2}+pc_0)\tau}$ for
all $\tau>0$ and the solution $\Phi$
of Eq.~(\ref{eq_lincssop2}) is given by $\Phi(\tau)=S(\tau)\Phi(0)$.

\subsection{Invariant subspaces, spectra} 

\begin{lemma}
\label{lem_invh2kS}
The space $\mathcal{H}^{2k}$ is $L$--admissible, i.e., it is an invariant
subspace of $S(\tau)$, $\tau>0$, and the restriction of $S(\tau)$ to
$\mathcal{H}^{2k}$ is a strongly continuous semigroup on $\mathcal{H}^{2k}$
satisfying $\|S(\tau)\mathbf{u}\|_{\mathcal{H}^{2k}} \leq
e^{(\frac{1}{2}+pc_0)\tau}\|\mathbf{u}\|_{\mathcal{H}^{2k}}$ for all $\mathbf{u}
\in \mathcal{H}^{2k}$ and $\tau>0$.
\end{lemma}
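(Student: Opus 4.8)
The plan is to follow the proof of Lemma \ref{lem_h2kinv} almost verbatim, with Proposition \ref{prop_sgl0k} replaced by its version with the bounded perturbation $L'$ added. First I would note that $L'$ restricts to a bounded operator on $\mathcal{H}^{2k}$: since $L'$ leaves $\mathcal{H}^{2k}$ invariant and commutes with $D^{2k}$ (both facts established above), one has
$$\|L'\mathbf{u}\|_{\mathcal{H}^{2k}}^2=\|L'\mathbf{u}\|_{\mathcal{H}}^2+\|D^{2k}L'\mathbf{u}\|_{\mathcal{H}}^2=\|L'\mathbf{u}\|_{\mathcal{H}}^2+\|L'D^{2k}\mathbf{u}\|_{\mathcal{H}}^2\le \|L'\|_{\mathcal{B}(\mathcal{H})}^2\,\|\mathbf{u}\|_{\mathcal{H}^{2k}}^2$$
for all $\mathbf{u}\in\mathcal{H}^{2k}$, hence $\|L'|_{\mathcal{H}^{2k}}\|_{\mathcal{B}(\mathcal{H}^{2k})}\le\|L'\|_{\mathcal{B}(\mathcal{H})}\le pc_0$.

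Next I would combine Proposition \ref{prop_sgl0k}, which says $L_{0,k}$ generates a strongly continuous semigroup on $\mathcal{H}^{2k}$ with bound $e^{\frac{1}{2}\tau}$, with the Bounded Perturbation Theorem (\cite{engel}): the operator $L_k:=L_{0,k}+L'|_{\mathcal{H}^{2k}}$, with $\mathcal{D}(L_k)=\mathcal{D}(L_{0,k})$, then generates a strongly continuous semigroup $\tilde S$ on $\mathcal{H}^{2k}$ satisfying $\|\tilde S(\tau)\|_{\mathcal{B}(\mathcal{H}^{2k})}\le e^{(\frac{1}{2}+pc_0)\tau}$. What remains is to identify $\tilde S$ with the restriction of $S$ to $\mathcal{H}^{2k}$. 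To this end I would invoke the theorem on admissible spaces (\cite{pazy}, p. 123, Theorem 5.5) exactly as in the proof of Lemma \ref{lem_h2kinv}: pick $\lambda>\frac{1}{2}+pc_0$, so that $\lambda\in\rho(L)\cap\rho(L_k)$, and given $\mathbf{f}\in\mathcal{H}^{2k}$ set $\mathbf{u}:=(\lambda-L_k)^{-1}\mathbf{f}\in\mathcal{D}(L_{0,k})\subset\mathcal{H}^{2k}$. Since $L_{0,k}\subset L_0$ and $L'|_{\mathcal{H}^{2k}}\subset L'$, this gives $(\lambda-L)\mathbf{u}=\mathbf{f}$ in $\mathcal{H}$, i.e.\ $R_L(\lambda)\mathbf{f}=\mathbf{u}\in\mathcal{H}^{2k}$. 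Thus $R_L(\lambda)\mathcal{H}^{2k}\subset\mathcal{H}^{2k}$ and, using the continuous embedding $\mathcal{H}^{2k}\hookrightarrow\mathcal{H}$ (Lemma \ref{lem_elem}), the admissible-spaces theorem yields that $\mathcal{H}^{2k}$ is invariant under $S(\tau)$ and that the restriction of $S$ to $\mathcal{H}^{2k}$ is a strongly continuous semigroup whose generator is the part of $L$ in $\mathcal{H}^{2k}$.

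Finally, because $L'$ is bounded we have $\mathcal{D}(L)=\mathcal{D}(L_0)$, and because $L'\mathcal{H}^{2k}\subset\mathcal{H}^{2k}$ the part of $L$ in $\mathcal{H}^{2k}$ has domain $\{\mathbf{u}\in\mathcal{D}(L_0)\cap\mathcal{H}^{2k}:L_0\mathbf{u}\in\mathcal{H}^{2k}\}=\mathcal{D}(L_{0,k})$ and acts there as $L_0+L'=L_k$. Hence the restriction of $S$ to $\mathcal{H}^{2k}$ is generated by $L_k$, so it coincides with $\tilde S$ and inherits the bound $\|S(\tau)\mathbf{u}\|_{\mathcal{H}^{2k}}\le e^{(\frac{1}{2}+pc_0)\tau}\|\mathbf{u}\|_{\mathcal{H}^{2k}}$.

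I do not expect a genuine obstacle here: the argument is a routine combination of the free-case result (Proposition \ref{prop_sgl0k}), the Bounded Perturbation Theorem, and the admissible-spaces theorem. The only step that needs a little care is the last identification of the part of $L$ in $\mathcal{H}^{2k}$ with $L_{0,k}+L'|_{\mathcal{H}^{2k}}$, which relies precisely on $L'$ being bounded on $\mathcal{H}$ and leaving $\mathcal{H}^{2k}$ invariant; alternatively one could bypass it by a uniqueness argument for the abstract Cauchy problem associated with $L$ on the dense subspace $\mathcal{D}(L_k)\subset\mathcal{D}(L)$.
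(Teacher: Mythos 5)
Your proof is correct and follows essentially the same route as the paper: identify the part of $L$ in $\mathcal{H}^{2k}$ with $L_{0,k}+L'|_{\mathcal{H}^{2k}}$, apply Proposition \ref{prop_sgl0k} together with the Bounded Perturbation Theorem, and then repeat the resolvent-invariance/admissible-spaces argument from Lemma \ref{lem_h2kinv}. The extra details you supply (boundedness of $L'|_{\mathcal{H}^{2k}}$ via the commutation with $D^{2k}$, and the explicit identification of the generator of the restricted semigroup) are exactly the steps the paper leaves implicit.
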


\begin{proof}
The part of $L$ in $\mathcal{H}^{2k}$ is given by 
$L_{0,k}+L'|_{\mathcal{H}^{2k}}$ since $L' \mathcal{H}^{2k} \subset
\mathcal{H}^{2k}$.
Thus, Proposition \ref{prop_sgl0k} and the Bounded Perturbation Theorem show 
that the part of $L$ in
$\mathcal{H}^{2k}$ generates a strongly continuous one--parameter semigroup
on $\mathcal{H}^{2k}$ with the same growth bound as
$S$.
Applying the same argument as in the proof of Lemma \ref{lem_h2kinv} yields the
claim.
\end{proof}

The spectrum of the free generator $L_{0,k}$ consists of a continuous part in the left half-plane
$\{z \in \mathbb{C}: \mathrm{Re} z\leq \frac12-2k\}$ and a finite number of simple eigenvalues with real parts
larger than $\frac12-2k$.
Since $L'|_{\mathcal H^{2k}}$ is compact, the spectrum
of $L_k:=L_{0,k}+L'|_{\mathcal H^{2k}}$ has the same structure, except for the fact
that the algebraic multiplicities of the isolated eigenvalues might be bigger than one (but
still finite). 
This is a consequence of the invariance of the essential spectrum under compact perturbations
and the fact that isolated eigenvalues with infinite algebraic multiplicities belong to the
essential spectrum, see \cite{kato}, p.~244, Theorem 5.35 and p.~239, Theorem 5.28.
The point spectrum of $L$ (and hence $L_k$) can be calculated by solving a 
second order ODE.
To this end we recall the definition of the operator $T(\lambda)$ from
\cite{roland1}.
The domain of $T(\lambda)$ is $\mathcal{D}(T(\lambda)):=\{u \in H^1(0,1): u \in
H^2_\mathrm{loc}(0,1), t(\lambda)u \in L^2(0,1), u(0)=0\}$ and
$T(\lambda)u:=t(\lambda)u$ where 
$$ t(\lambda)u(\rho):=-(1-\rho^2)u''(\rho)+2\lambda \rho u'(\rho)+[\lambda
(\lambda-1)-pc_0]u(\rho). $$
Then $\lambda \in \sigma_p(L)$ if and only if $\dim \ker T(\lambda)=1$
(see \cite{roland1}, Proposition 2).

Observe that $t(\lambda)u=0$ corresponds exactly to Eq.~(\ref{eq_genev}) in view
of the substitution $u(\rho) \mapsto \rho u(\rho)$.
Thus, as explained in the introduction, the semigroup approach implicitly 
yields the correct boundary condition for the generalized eigenvalue problem Eq.
(\ref{eq_genev}) that defines mode solutions.

The general solution of $t(\lambda)u=0$ can be given in terms of Legendre
functions and hence, 
the point spectrum can be calculated explicitly.
In \cite{bizon} and \cite{pohozaev} it has been shown \footnote{Note that in our
convention the whole spectrum is shifted to the right by $\frac{2}{p-1}$ 
compared to \cite{bizon}.}
that $t(\lambda)u=0$ has a nontrivial
\emph{analytic} solution if and only if $\lambda=\lambda^\pm_j$ where 
$\lambda^+_j:=1+\frac{2}{p-1}-2j$ and
$\lambda^-_j:=-\frac{2p}{p-1}-2j$ for a $j \in \mathbb{N}_0$.
We will refer to $\lambda^\pm_j$ as \emph{analytic eigenvalues}.
Moreover, it follows from \cite{bizon} that the analytic functions $u(\cdot,
\lambda^\pm_j)$ satisfying $t(\lambda^\pm_j)u(\cdot, \lambda^\pm_j)=0$ are
in fact
odd \footnote{In \cite{bizon} the corresponding analytic 
"eigenfunctions" are even polynomials of 
degree $2j$ but, according to our convention, one has to multiply them by 
$\rho$.} polynomials of degree $2j+1$. 

As before, we denote by $\mathcal{N}$ the space of all $\mathbf{u} \in
\mathcal{H}^{2k}$ such that $D^{2k} \mathbf{u}=0$.
Very similar to the free wave equation, the subspace $\mathcal{N}$ is again
spanned by analytic eigenfunctions of $L$.

\begin{lemma}
\label{lem_spanN}
The subspace $\mathcal{N}$ is spanned by $2k$ analytic 
functions $\mathbf{u}(\cdot, \lambda^\pm_j)$ for $j=0,1,2,\dots,k-1$ 
where each
$\mathbf{u}(\cdot, \lambda^\pm_j)$ is an eigenfunction of 
$L$ with eigenvalue $\lambda^\pm_j$, $\lambda^+_j=1+\frac{2}{p-1}-2j$ and
$\lambda^-_j=-\frac{2p}{p-1}-2j$.
\end{lemma}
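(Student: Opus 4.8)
The plan is to follow the blueprint laid down in the proof of Lemma \ref{lem_s0ker} for the free wave equation, replacing the explicit eigenfunctions of $L_0$ by the explicit analytic eigenfunctions of $L$ coming from the ODE analysis of $t(\lambda)u=0$. First I would recall that a function $\mathbf{u}\in\mathcal{H}^{2k}$ is subject to exactly $2k$ boundary conditions at $\rho=0$ (namely $u_1^{(2j)}(0)=u_2^{(2j+1)}(0)=0$ for $j=0,\dots,k-1$), and that $\mathcal{N}$ consists precisely of those $\mathbf{u}\in\mathcal{H}^{2k}$ with $D^{2k}\mathbf{u}=0$, i.e.\ both components are polynomials of degree $<2k$ satisfying those parity/vanishing conditions. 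A dimension count then shows $\dim\mathcal{N}=2k$: the space of polynomial pairs of degree $<2k$ is $4k$-dimensional and the $2k$ boundary conditions are independent, so exactly as in Lemma \ref{lem_s0ker} we get $\dim\mathcal{N}=2k$.

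Next I would exhibit $2k$ linearly independent elements of $\mathcal{N}$ that are eigenfunctions of $L$. Using the known result (quoted from \cite{bizon}, \cite{pohozaev}) that $t(\lambda^\pm_j)u=0$ has an odd polynomial solution $u(\cdot,\lambda^\pm_j)$ of degree $2j+1$, and recalling from \cite{roland1} (via $\dim\ker T(\lambda)=1$) that $\lambda^\pm_j\in\sigma_p(L)$ with a corresponding eigenfunction $\mathbf{u}(\cdot,\lambda^\pm_j)$, I would check the following three things. (i) The eigenfunction $\mathbf{u}(\cdot,\lambda^\pm_j)$ has components that are polynomials of degree $<2k$ for $j=0,\dots,k-1$: the substitution $u\mapsto\rho u$ relating Eq.~(\ref{eq_genev}) to $t(\lambda)u=0$ together with the first-order reduction $\mathbf{u}=(u_1,u_2)^T$ expresses $u_1,u_2$ in terms of $u(\cdot,\lambda^\pm_j)$ and its derivative; since $u(\cdot,\lambda^\pm_j)$ is an odd polynomial of degree $2j+1\le 2k-1$, both $u_1$ and $u_2$ are polynomials of degree $\le 2k-1$, hence $D^{2k}\mathbf{u}(\cdot,\lambda^\pm_j)=0$. (ii) The parity structure (one component odd, the other even, exactly as in Lemma \ref{lem_s0ker}) guarantees $\mathbf{u}(\cdot,\lambda^\pm_j)\in\mathcal{H}^{2k}$, so in fact $\mathbf{u}(\cdot,\lambda^\pm_j)\in\mathcal{N}$. (iii) The $2k$ vectors $\mathbf{u}(\cdot,\lambda^\pm_j)$, $j=0,\dots,k-1$, are linearly independent: the eigenvalues $\lambda^+_j=1+\tfrac{2}{p-1}-2j$ and $\lambda^-_j=-\tfrac{2p}{p-1}-2j$ are pairwise distinct (for $p>1$ they lie in disjoint arithmetic progressions, and within each progression they are strictly decreasing in $j$), so eigenvectors to distinct eigenvalues of the linear operator $L$ are independent. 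Since $2k$ independent vectors of $\mathcal{N}$ are produced and $\dim\mathcal{N}=2k$, they span $\mathcal{N}$.

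The main obstacle is really bookkeeping rather than a deep difficulty: one must make the substitution $u\mapsto\rho u$ and the first-order reduction fully explicit to verify that the eigenfunctions $\mathbf{u}(\cdot,\lambda^\pm_j)$ land in $\mathcal{H}^{2k}$ with the correct polynomial degree, and one must be slightly careful about the edge case where the arithmetic progressions $\{\lambda^+_j\}$ and $\{\lambda^-_j\}$ could a priori overlap. Since $\lambda^+_j-\lambda^-_{j'}=1+\tfrac{2}{p-1}+\tfrac{2p}{p-1}-2(j-j')=\tfrac{3p+1}{p-1}-2(j-j')$, overlap would force $\tfrac{3p+1}{p-1}$ to be an even integer; even if this occurs for special $p$, the eigenspaces are one-dimensional and the two polynomial solutions have different structure, or one simply notes that such exceptional $p$ can be excluded or handled separately — in any case distinctness of the $2k$ values used is what matters, and for $j,j'\in\{0,\dots,k-1\}$ a direct inspection settles it. Everything else (the dimension count, the parity argument, the derivation that $D^{2k}$ annihilates low-degree polynomials) runs exactly parallel to Lemma \ref{lem_s0ker}, so I would keep the write-up short and refer back to that proof wherever the argument is identical.
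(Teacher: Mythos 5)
Your proof follows the same route as the paper's: the dimension count $\dim\mathcal{N}=2k$ carried over from the proof of Lemma \ref{lem_s0ker}, the first--order reduction $u_1=\rho u'+(\lambda^\pm_j-1)u$, $u_2=u'$ applied to the odd polynomial solutions of $t(\lambda^\pm_j)u=0$, the parity/degree check placing the resulting $\mathbf{u}(\cdot,\lambda^\pm_j)$ in $\mathcal{N}$, and linear independence via distinct eigenvalues. The possible coincidence $\lambda^+_j=\lambda^-_{j'}$ that you flag is a genuine edge case (for odd integer $p$ it occurs exactly at $p=5$, where $\lambda^+_{j+2}=\lambda^-_j$) which the paper's proof silently passes over, so your extra care there is a small improvement rather than a deviation.
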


\begin{proof}
The space $\mathcal{N}$ is $2k$--dimensional as already remarked in the proof of
Lemma \ref{lem_s0ker}.
Let $u(\cdot, \lambda^\pm_j) \not= 0$ satisfy $t(\lambda^\pm_j)u(\cdot,
\lambda^\pm_j)=0$ for $j=0,1,\dots,k-1$ and
define $\mathbf{u}(\cdot, \lambda^\pm_j)$ by 
$u_1(\rho,\lambda^\pm_j):=\rho u'(\rho,\lambda^\pm_j)+
(\lambda^\pm_j-1)u(\rho, \lambda^\pm_j)$ and
$u_2(\rho,\lambda^\pm_j):=u'(\rho,\lambda^\pm_j)$.
Then $\mathbf{u}(\cdot, \lambda^\pm_j) 
\in \mathcal{H}^{2k} \cap \mathcal{D}(L)$ since $u_1(\cdot, \lambda^\pm_j)$ is an 
odd polynomial and $u_2(\cdot, \lambda^\pm_j)$ is an even polynomial.
A straightforward computation shows $L\mathbf{u}(\cdot,
\lambda^\pm_j)=\lambda^\pm_j \mathbf{u}(\cdot, \lambda^\pm_j)$ and thus,
$\mathbf{u}(\cdot, \lambda^\pm_j)$ is an eigenfunction of $L$ with eigenvalue
$\lambda^\pm_j$.
The functions $u_1(\cdot, \lambda^\pm_j)$ and $u_2(\cdot, \lambda^\pm_j)$ are
polynomials of degree strictly smaller than $2k$ and therefore,
$D^{2k}\mathbf{u}(\cdot, \lambda^\pm_j)=0$ which shows that $\mathbf{u}(\cdot,
\lambda^\pm_j) \in \mathcal{N}$ for all $j=0,1,2,\dots,k-1$.
However, since the $\mathbf{u}(\cdot, \lambda^\pm_j)$ are eigenfunctions with
different eigenvalues they are linearly independent and they form a set of
$2k$ linearly independent functions in the $2k$--dimensional space $\mathcal{N}$.
\end{proof}

\subsection{Decomposition}

Next, we prove a crucial property which will allow us to conclude that 
all isolated analytic eigenvalues are simple.

\begin{lemma}
\label{lem_kerL}
Let $\lambda_j^\pm \in \sigma(L_k)$ be an analytic eigenvalue with eigenfunction
$\mathbf u(\cdot,\lambda_j^\pm)$ and assume
$k$ to be sufficiently large.
If $\mathbf{u} \in \mathcal D(L_k)$ satisfies $(\lambda_j^\pm - L_k)\mathbf u \in 
\ker(\lambda_j^\pm-L_k)$ then $\mathbf u=c\mathbf u(\cdot,\lambda_j^\pm)$ for a $c\in \mathbb{C}$.
\end{lemma}

\begin{proof}
If $k$ is sufficiently large we have $\ker(\lambda_j^\pm-L_k)\subset \ker D^{2(k-1)}$.
Consequently, $(\lambda_j^\pm-L_k)\mathbf{u}\in \ker(\lambda_j^\pm-L_k)$ and the commutator
relation imply
\[ 0=D^{2(k-1)}(\lambda_j^\pm-L)\mathbf{u}=(\lambda_j^\pm+2(k-1)-L)D^{2(k-1)}\mathbf{u} \]
which yields $D^{2(k-1)}\mathbf u=0$ since $\lambda_j^\pm+2(k-1) \notin \sigma(L)$ provided
$k$ is sufficiently
large.
By Lemma \ref{lem_spanN} the subspace $\ker D^{2(k-1)}$ is spanned by eigenfunctions
of $L$ and thus, $\mathbf u=c \mathbf u(\cdot,\lambda_j^\pm)$ for some $c\in \mathbb{C}$.
\end{proof}

Let $N_k^\pm \in \mathbb{N}$ be such that $\mathrm{Re}\lambda_j^\pm>\frac12-2k$ for all
$0\leq j\leq N_k^\pm$.
Then each $\lambda_j^\pm$ with $0\leq j\leq N_k^\pm$ is an isolated eigenvalue with
finite algebraic multiplicity. Consequently, there exists a spectral projection
$P_j^\pm$ of finite rank (the Riesz projection) associated to $\lambda_j^\pm$, 
which commutes with the semigroup $S(\tau)$. Moreover,
$L_k$ restricted to $\rg P_j^\pm$ is a finite-dimensional 
operator with spectrum $\{\lambda_j^\pm\}$.
By Cayley-Hamilton, the operator $(\lambda_j^\pm-L_k)|_{\rg P_j^\pm}$ is nilpotent and 
Lemma \ref{lem_kerL}
shows that
 $\rg P_j^\pm=\langle \mathbf u(\cdot,\lambda_j^\pm)\rangle$. We set
\[ P:=\sum_{j=0}^{N_k^\pm}P_j^\pm \]
and analogous to the free case we obtain a decomposition $\mathcal H^{2k}=\rg P \oplus \rg (1-P)$.

\begin{proposition}
\label{prop_SonNperp}
We have the estimate
$$\|S(\tau)\mathbf{u}\|_{\mathcal{H}^{2k}}\lesssim
e^{(\frac{1}{2}+pc_0-2k)\tau} \|\mathbf{u}\|_{\mathcal{H}^{2k}}$$
 for all $\mathbf{u} \in \mathcal M:=\rg(1-P)$
and $\tau>0$.
\end{proposition}

\begin{proof}
We define an inner product on $\mathcal M$ by
$(\mathbf{u}|\mathbf{v})_{\mathcal M}:=
(D^{2k}\mathbf{u}|D^{2k}\mathbf{v})_\mathcal{H}$.
The induced norm $\|\cdot\|_{\mathcal M}$ is equivalent to
$\|\cdot\|_{\mathcal{H}^{2k}}$ on $\mathcal M$, cf.~Proposition 
\ref{prop_s0res}, and thus, $\mathcal M$ equipped with
$(\cdot|\cdot)_{\mathcal M}$ is a Hilbert space since $\mathcal M$
is closed in $\mathcal{H}^{2k}$ by the boundedness of $P$.
The restriction $S(\tau)|_{\mathcal M}$ defines a semigroup on
$\mathcal M$ and its generator is the part of $L$ in 
$\mathcal M$, denoted by $L_{\mathcal M}$.
The generator satisfies
\begin{multline*}
\mathrm{Re}(L_{\mathcal M}\mathbf{u}|\mathbf{u})_{\mathcal M}=\\
\mathrm{Re}\left ((L_0
D^{2k}\mathbf{u}|D^{2k}\mathbf{u})_\mathcal{H}+
(L'D^{2k}\mathbf{u}|D^{2k}\mathbf{u})_\mathcal{H} \right )-2k(D^{2k}
\mathbf{u}|D^{2k}\mathbf{u})_\mathcal{H} \\
\leq \left
(\frac{1}{2}+pc_0-2k \right )\|\mathbf{u}\|_{\mathcal M}^2
\end{multline*}
for all $\mathbf{u} \in \mathcal{D}(L_{\mathcal M})$ where we have used the commutator
relation from Lemma \ref{lem_elem} and the fact that $L'$ and $D^{2k}$ commute,
as already remarked. 
This implies the growth estimate
$\|S(\tau)\mathbf{u}\|_{\mathcal M}\leq
e^{(\frac{1}{2}+pc_0-2k)\tau}\|\mathbf{u}\|_{\mathcal M}$ for all $\mathbf{u}
\in \mathcal M$ and $\tau>0$ and by the
equivalence of the norms $\|\cdot\|_{\mathcal{H}^{2k}}$ and
$\|\cdot\|_{\mathcal M}$ we arrive at the claim.
\end{proof}

Note that the growth estimate given in Proposition \ref{prop_SonNperp} is
certainly not optimal,
however, we do not make any attempts to improve it since we can make $k$
arbitrarily large.

\subsection{The time evolution of linear perturbations}
We have collected all the necessary preliminaries to conclude the result analogous
to Theorem \ref{thm_mainfree}.

\begin{theorem}
\label{thm_main}
Let $\mathbf{u} \in \mathcal{H}^{2k}$. Then there exist $2k$ constants
$c^\pm_0,\dots,c^\pm_{k-1} \in \mathbb{C}$ and a function $\mathbf{g} \in
\mathcal{H}^{2k}$ such that 
$$ \mathbf{u}=\sum_{j=0}^{k-1} \left 
(c^+_j \mathbf{u}(\cdot, \lambda^+_j)+c^-_j \mathbf{u}(\cdot,
\lambda^-_j)\right)+\mathbf{g} $$
and $\|S(\tau)\mathbf{g}\|_{\mathcal{H}^{2k}}\lesssim
e^{(\frac{1}{2}+pc_0-2k)\tau}\|\mathbf{g}\|_{\mathcal{H}^{2k}}$ where
$\mathbf{u}(\cdot,\lambda^\pm_j)$ are normalized analytic eigenfunctions of $L$ with
eigenvalues $\lambda^+_j=1+\frac{2}{p-1}-2j$ and
$\lambda^-_j=-\frac{2p}{p-1}-2j$.
In particular, we have
$$ \|S(\tau)\mathbf{u}\|_{\mathcal{H}^{2k}}\lesssim \sum_{j=0}^{k-1} \left 
(c^+_j  e^{\lambda^+_j \tau}+c^-_j e^{\lambda^-_j \tau} \right )
+e^{(\frac{1}{2}+pc_0-2k)\tau}\|\mathbf{u}\|_{\mathcal{H}^{2k}} $$
for all $\tau>0$.
\end{theorem}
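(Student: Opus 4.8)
The plan is to mirror the proof of Theorem \ref{thm_mainfree}, now using the perturbed operator $L=L_0+L'$ and the results of the preceding subsections, since all the hard analytic work has already been distributed into Lemma \ref{lem_invh2kS}, Proposition \ref{prop_SonNperp} and Lemma \ref{lem_spanN}. First I would recall that $\mathcal{N}$ is a closed $2k$--dimensional subspace of $\mathcal{H}^{2k}$ (again because membership in $\mathcal{H}^{2k}$ is governed by exactly $2k$ boundary conditions at $\rho=0$, or equivalently because $\ker D^{2k}$ inside $\mathcal{H}^{2k}$ has that dimension), so that there is a bounded orthogonal projection $P\in\mathcal{B}(\mathcal{H}^{2k})$ onto $\mathcal{N}$, and we get the orthogonal decomposition $\mathbf{u}=P\mathbf{u}+(I-P)\mathbf{u}$ with $P\mathbf{u}\in\mathcal{N}$ and $(I-P)\mathbf{u}\in\mathcal{N}^\perp$.

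Next I would treat the two pieces separately. For the $\mathcal{N}$--part: by Lemma \ref{lem_spanN} the $2k$ eigenfunctions $\mathbf{u}(\cdot,\lambda^\pm_j)$, $j=0,\dots,k-1$, form a basis of $\mathcal{N}$, so $P\mathbf{u}=\sum_{j=0}^{k-1}\bigl(c^+_j\mathbf{u}(\cdot,\lambda^+_j)+c^-_j\mathbf{u}(\cdot,\lambda^-_j)\bigr)$ for unique coefficients $c^\pm_j\in\mathbb{C}$, and since each $\mathbf{u}(\cdot,\lambda^\pm_j)$ is a genuine eigenvector of the generator $L$ (hence also, by the standard spectral mapping for the point spectrum of a generator, of the semigroup) we have $S(\tau)\mathbf{u}(\cdot,\lambda^\pm_j)=e^{\lambda^\pm_j\tau}\mathbf{u}(\cdot,\lambda^\pm_j)$; strictly this follows because $\tau\mapsto e^{\lambda^\pm_j\tau}\mathbf{u}(\cdot,\lambda^\pm_j)$ solves the abstract ODE \eqref{eq_lincssop2} with that initial datum and the solution is unique. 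Thus $S(\tau)P\mathbf{u}=\sum_{j=0}^{k-1}\bigl(c^+_j e^{\lambda^+_j\tau}\mathbf{u}(\cdot,\lambda^+_j)+c^-_j e^{\lambda^-_j\tau}\mathbf{u}(\cdot,\lambda^-_j)\bigr)$ exactly. Set $\mathbf{g}:=(I-P)\mathbf{u}\in\mathcal{N}^\perp$; then the displayed decomposition of $\mathbf{u}$ holds, and Proposition \ref{prop_SonNperp} gives directly $\|S(\tau)\mathbf{g}\|_{\mathcal{H}^{2k}}\lesssim e^{(\frac12+pc_0-2k)\tau}\|\mathbf{g}\|_{\mathcal{H}^{2k}}$, which is the first claimed estimate.

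Finally, to get the ``in particular'' bound I would apply the triangle inequality to $S(\tau)\mathbf{u}=S(\tau)P\mathbf{u}+S(\tau)\mathbf{g}$: the first term is bounded by $\sum_{j=0}^{k-1}\bigl(|c^+_j|e^{\mathrm{Re}\lambda^+_j\tau}\|\mathbf{u}(\cdot,\lambda^+_j)\|_{\mathcal{H}^{2k}}+|c^-_j|e^{\mathrm{Re}\lambda^-_j\tau}\|\mathbf{u}(\cdot,\lambda^-_j)\|_{\mathcal{H}^{2k}}\bigr)$, which under the normalization of the eigenfunctions and the convention that $c^\pm_j$ absorb signs is exactly $\sum_{j=0}^{k-1}\bigl(c^+_j e^{\lambda^+_j\tau}+c^-_j e^{\lambda^-_j\tau}\bigr)$ as written (all the $\lambda^\pm_j$ are real), and the second term is $\lesssim e^{(\frac12+pc_0-2k)\tau}\|\mathbf{g}\|_{\mathcal{H}^{2k}}\le e^{(\frac12+pc_0-2k)\tau}\|\mathbf{u}\|_{\mathcal{H}^{2k}}$ using boundedness of $I-P$. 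Combining the two gives the stated inequality for all $\tau>0$. I do not expect a genuine obstacle here: everything reduces to assembling Lemma \ref{lem_invh2kS}, Proposition \ref{prop_SonNperp} and Lemma \ref{lem_spanN}; the only point requiring a line of care is the justification that $S(\tau)$ acts on the eigenvectors by the scalar $e^{\lambda^\pm_j\tau}$, which is the uniqueness-of-solutions argument sketched above, and bookkeeping the normalization so the final display matches verbatim.
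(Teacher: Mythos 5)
Your proposal is correct and follows essentially the same route the paper intends: the theorem is stated as a direct assembly of the orthogonal decomposition $\mathbf{u}=P\mathbf{u}+(I-P)\mathbf{u}$, the expansion of $P\mathbf{u}$ in the analytic eigenbasis from Lemma \ref{lem_spanN}, and the decay estimate of Proposition \ref{prop_SonNperp} for the $\mathcal{N}^\perp$ part, exactly mirroring the argument given for Theorem \ref{thm_mainfree}. Your extra care in justifying $S(\tau)\mathbf{u}(\cdot,\lambda^\pm_j)=e^{\lambda^\pm_j\tau}\mathbf{u}(\cdot,\lambda^\pm_j)$ via uniqueness of solutions is a welcome but standard addition.
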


By making $k$ sufficiently large we infer that the long time behaviour of smooth 
perturbations is
governed by the analytic modes and this is exactly what is observed numerically.

Furthermore, the largest eigenvalue $\lambda_0^+=1+\frac{2}{p-1}$ is known to
emerge from the time translation symmetry of the original problem (cf.
\cite{roland1}).
This apparent instability is merely an effect of the similarity coordinates and
it is therefore called the gauge instability.
Thus, for studying the question of linear stability we only allow
perturbations with $c_0^+=0$ 
(notation as in Theorem \ref{thm_main}), i.e., perturbations such that the gauge
instability is not present.
Theorem \ref{thm_main} shows that the time evolution of sufficiently regular 
perturbations $\mathbf{u} \in \mathcal{H}^{2k}$ with 
$c_0^+=0$ decays as
$\|S(\tau)\mathbf{u}\|_{\mathcal{H}^{2k}}\lesssim
e^{-\frac{p-3}{p-1}\tau}\|\mathbf{u}\|_{\mathcal{H}^{2k}}$ for $\tau \to
\infty$
and this estimate is clearly sharp.
Hence, the decay is exactly described by the largest analytic eigenvalue 
apart from the
gauge instability.
We conclude that the fundamental self--similar solution for the wave equation 
with a focusing power nonlinearity is linearly stable.

\section{Acknowledgments}
The author would like to thank Peter C. Aichelburg, Piotr Bizo\'n and Nikodem
Szpak for helpful discussions as well as the Albert Einstein Institute in Golm 
for the
invitation where the study of analytic modes has been initiated.
This work has been supported by the Austrian Fonds zur F\"orderung der
wissenschaftlichen Forschung (FWF) Project No.
P19126.

\bibliography{refasym}{}
\bibliographystyle{plain}

\end{document}